\title{On the Distributability of Mobile Ambients}
\author{Kirstin Peters
	\institute{TU Berlin}
	\and
	Uwe Nestmann
	\institute{TU Berlin}
}
\begin{document}

\maketitle

\begin{abstract}
	Modern society is dependent on distributed software systems and to verify them different modelling languages such as mobile ambients were developed.
	To analyse the quality of mobile ambients as a good foundational model for distributed computation, we analyse the level of synchronisation between distributed components that they can express.
	Therefore, we rely on earlier established synchronisation patterns.
	It turns out that mobile ambients are not fully distributed, because they can express enough synchronisation to express a synchronisation pattern called \patternM.
	However, they can express strictly less synchronisation than the standard \piCal.
	For this reason, we can show that there is no good and distributability-preserving encoding from the standard \piCal into mobile ambients and also no such encoding from mobile ambients into the \joinCal, \ie the expressive power of mobile ambients is in between these languages.
	Finally, we discuss how these results can be used to obtain a fully distributed variant of mobile ambients.
\end{abstract}

\section{Introduction}

Modern society is increasingly dependent on large-scale software systems that are distributed, collaborative, and communication-centred. Most of the existing approaches that analyse the distributability of concurrent systems use special formalisms often equipped with an explicit notion of location, \eg \cite{bestDarondeau11} in Petri nets or the distributed \piCal \cite{hennessy07}. Other approaches implement locations implicitly, as \eg the parallel operator in the \piCal that combines different distributed components of a system. In the latter case, we consider \emph{distributability} and, thus, all possible explicitly-located variants of a calculus.

The \piCal \cite{milnerParrowWalker92} is a well-known and frequently used process calculus to model concurrent systems. Therein, intuitively, the \emph{degree of distributability} cor\-res\-ponds to the number of parallel components that can act independently. Practical experience, though, has shown that it is not possible to implement every \piCal term---not even every asynchronous one---in an asynchronous setting while preserving its degree of distributability. To overcome these problems \eg the \joinCal \cite{levy97} or the distributed \piCal \cite{hennessy07} were introduced as models of distributed computation.

To analyse the quality of an approach as a good foundational model for distributed computation, we compare the expressiveness of different such models \wrt to their power to express synchronisation between distributed components.
Such synchronisations make the implementation of terms in an asynchronous setting difficult and, thus, indicate languages that are not suitable to describe distributed computation.
In particular, we try to identify hidden sources of synchronisation, \ie synchronisation that was not intended with the design of the calculus.

\vspace*{-0.75em}
\paragraph{Distributability and Synchronisation Patterns.}
To analyse the degree of distribution in process calculi and to compare different calculi by their power to express synchronisation, \cite{petersNestmannGoltz13, peters12} defines a criterion for the preservation of distributability in encodings and introduces synchronisation patterns to describe minimal forms of synchronisation.
Process calculi are then separated by their power to express such synchronisation patterns and, thus, by the kinds of synchronisation that they contain.
Therefore, we show that no good and distributability-preserving encoding can exist from a calculus with enough synchronisation to express some synchronisation pattern into a calculus that cannot express this pattern.
In this sense, synchronisation patterns have two purposes:
\begin{inparaenum}[(1)]
	\item First, they describe some particular form or level of synchronisation in an abstract and model-independent way. Thereby, they help to spot forms of synchronisation---in particular, forms of synchronisation that were not intended with the design of the respective calculus.
	\item Second, they allow to separate calculi along their ability to express the respective pattern and the respective level of synchronisation.
\end{inparaenum}

\begin{figure}[t]
	\centering
	\tikzstyle{place}=[circle,draw=black,thick,minimum size=5mm]
	\tikzstyle{transition}=[rectangle,draw=black,thick,minimum size=5mm]
	\begin{tikzpicture}
		\node[place,tokens=1]	(p) at (1, 1.5) {};
		\node[place,tokens=1]	(q) at (3, 1.5) {};
		\node[transition]		(a) at (0, 0) {$ a $};
		\node[transition]		(b) at (2, 0) {$ b $};
		\node[transition]		(c) at (4, 0) {$ c $};

		\draw[-latex] (p) -- (a);
		\draw[-latex] (p) -- (b);
		\draw[-latex] (q) -- (b);
		\draw[-latex] (q) -- (c);

		\node (l) at (2, -1) {(a)};
	\end{tikzpicture}
	\hspace*{4em}
	\begin{tikzpicture}
		\foreach \x/\xtext in {1/a,2/ab,3/b,4/bc,5/c,6/ac}
        {
            \path (360*\x/6+90:1.5) node (\xtext) {};
        }
		\node (p)						{$ \PM $};

		\path[|->]			(p) edge node[near end, above] {$ a $}	(a);
		\path[|->]			(p) edge node[near end, right] {$ b $}	(b);
		\path[|->]			(p) edge node[near end, above] {$ c $}	(c);
		\path[|->,thick]	(p) edge node[sloped, scale=2] {$ \times $} node[near end, below] {$ a \parallel b $}	(ab);
		\path[|->,thick]	(p) edge node[near end, right] {$ a \parallel c $}	(ac);
		\path[|->,thick]	(p) edge node[sloped, scale=2] {$ \times $} node[near end, below] {$ b \parallel c $}	(bc);

		\node (l) at (0, -2) {(b)};
	\end{tikzpicture}
	\hspace*{4em}
	\begin{tikzpicture}
		\foreach \x/\xtext in {1/e,2/d,3/c,4/b,5/a}
        {
            \path (360*\x/5+125:0.8) node[transition] (\xtext) {$\xtext$};
            \path (360*\x/5-55:1.75) node[place,tokens=1] (p\x) {};
        }

        \draw[-latex] (p2) -- (a);
        \draw[-latex] (p2) -- (b);

        \draw[-latex] (p1) -- (b);
        \draw[-latex] (p1) -- (c);

        \draw[-latex] (p5) -- (c);
        \draw[-latex] (p5) -- (d);

        \draw[-latex] (p4) -- (d);
        \draw[-latex] (p4) -- (e);

        \draw[-latex] (p3) -- (e);
        \draw[-latex] (p3) -- (a);

		\node (l) at (0, -2) {(c)};
	\end{tikzpicture}
	\vspace*{-1em}
	\caption{A fully reachable pure \patternM in Petri nets (a), the \patternM as state in a transition system (b), and the synchronisation pattern \patternGreatM in Petri nets (c).}
	\label{fig:MInPetri}
	\label{fig:greatMInPetri}
\end{figure}
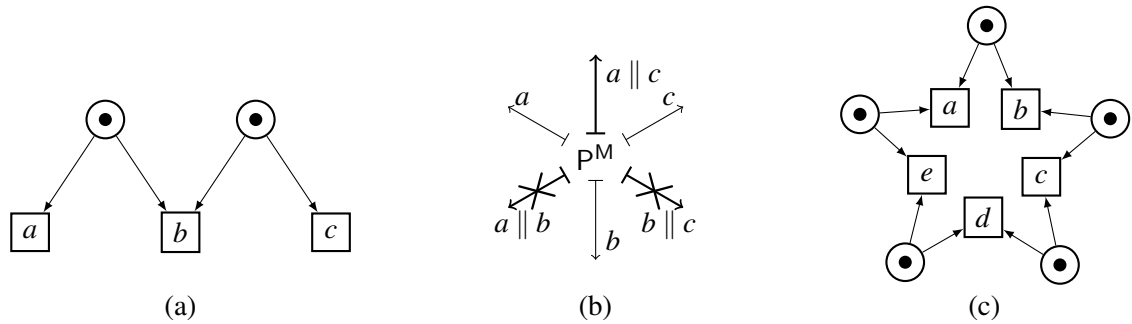

In \cite{petersNestmannGoltz13}, two synchronisation patterns, the pattern \patternM and the pattern \patternGreatM, are highlighted.
An \patternM, as visualised in Figure~\ref{fig:MInPetri}~(a), describes a Petri net that consists of two parallel transitions ($ a $ and $ c $) and one transition ($ b $) that is in conflict with both of the former. In other words, it describes a situation where either two parts of the net can proceed independently or they synchronise to perform a single transition together. \cite{glabbeekGoltzSchicke08b,glabbeekGoltzSchicke12} states that a Petri net specification can be implemented in an asynchronous, fully distributed setting iff it does not contain a fully reachable pure \patternM. Accordingly, they denote such Petri nets as distributable. They also present a description of a fully reachable pure \patternM as conditions on a state $ \PM $ in a step transition system, as visualized in Figure~\ref{fig:MInPetri}~(b), which allows us to directly use this pattern to reason about process calculi.
Note that $ a $, $ b $, and $ c $ in Figure~\ref{fig:MInPetri}~(b) are not labels. They serve just to distinguish different steps. Moreover, $ x \parallel y $ refer to the parallel execution of $ x $ and $ y $, given a step semantics.
Hence, a process calculus is distributable iff it does not contain a non-local \patternM.
A \patternGreatM is a chain of conflicting and distributable steps as they occur in an \patternM that build a circle of odd length.
The Figure~\ref{fig:greatMInPetri}~(c) nicely illustrates this circle of \patternM. There is \eg one \patternM consisting of the transitions $ a $, $ b $, and $ c $ with their corresponding two places. Another \patternM is build by the transitions $ b $, $ c $, and $ d $ with their corresponding two places and so on.

These patterns are then used to locate various $ \pi $-like calculi within a hierarchy with respect to the level of synchronisation that can be expressed in these languages.
More precisely, \cite{petersNestmannGoltz13} shows that
\begin{inparaenum}[(1)]
\item the \joinCal is distributed, because it does not contain either of the two synchronisation patterns,
\item the asynchronous \piCal and its extension with separate choice can express the pattern \patternM but no pattern \patternGreatM, whereas the standard \piCal with mixed choice contains \patternM's and \patternGreatM's.
\end{inparaenum}

\vspace*{-0.75em}
\paragraph{Mobile Ambients.}
In the current paper, we use the technique derived in \cite{petersNestmannGoltz13} to analyse the degree of distribution in mobile ambients.
Mobile ambients were introduced in \cite{cardelliGordon98, cardelliGordon00}.
Similar to the \joinCal, mobile ambients were designed as a calculus for distributed systems.
But, in contrast to the \joinCal, they do contain the pattern \patternM, as we show in the following.
Accordingly, mobile ambients are not fully distributed and their implementation in a fully distributed setting is difficult.
Fortunately, the little amount of synchronisation that is contained in mobile ambients is not enough to express the \patternGreatM.
Thus, mobile ambients are less synchronous than, \eg, the standard pi-calculus.
Moreover, the nature of the pattern \patternM that we find in mobile ambients tells us what kind of features lead to synchronisation in mobile ambients.
More precisely, we show that synchronisation in mobile ambients results from the so-called $ \maOpenAct $-actions and the fact that different ambients may share the same name.
This observation allows us to discuss ways to obtain a variant of mobile ambients that is free of hidden synchronisations and can, thus, be implemented easily in a distributed setting.

\vspace*{-0.75em}
\paragraph{Overview.}
Section~\ref{sec:technicalPreliminaries} introduces process calculi (\S~\ref{sec:processCalculi}), mobile ambients (\S~\ref{sec:mobileAmbients}), encodings (\S~\ref{sec:encodings}), and synchronisation patterns together with some results of \cite{petersNestmannGoltz13} (\S~\ref{sec:distributability}) that are necessary for this paper.
In Section~\ref{sec:MinMA}, we show that mobile ambients can express enough synchronisation to contain pattern \patternM and that this implies that there is no good and distributability-preserving encoding from mobile ambients into the \joinCal.
Section~\ref{sec:conflictsInMA} analyses the nature of conflicts in mobile ambients that limits the forms of synchronisation they can express. It is shown that mobile ambients do not contain \patternGreatM-patterns; this separates them from the standard \piCal.
The observations on the nature of synchronisation in mobile ambients is then used in Section~\ref{sec:distributeMA} to discuss ways to obtain a distributed variant of mobile ambients.
We conclude with Section~\ref{sec:conclusions}.
The missing proofs can be found in \cite{petersNestmann18techrep}.

\section{Technical Preliminaries}
\label{sec:technicalPreliminaries}

We start with some general observations on process calculi and the relevant notions that we need for the comparison of process calculi as described in \cite{petersNestmannGoltz13}.
Then we describe the calculus of mobile ambients as introduced in \cite{cardelliGordon98, cardelliGordon00}
and ``good'' encodings as defined in \cite{gorla10}.
Finally, we shortly revise the results of \cite{petersNestmannGoltz13} that are relevant for our analysis of mobile ambients.

\subsection{Process Calculi}
\label{sec:processCalculi}

A \emph{process calculus} is a language $ \lang = \ProcCal{\proc}{\step} $ that consists of a set of process terms $ \proc $ (its syntax) and a relation $ {\step} : \proc \times \proc $ on process terms (its reduction semantics). We often refer to process terms also simply as processes or as terms and use upper case letters $ P, Q, R, \ldots, P', P_1, \ldots $ to range over them.

Assume a countably-infinite set $ \names $, whose elements are called \emph{names}. We use lower case letters $ a, b, c, \ldots, a', a_1, \ldots $ to range over names. Let $ \tau \notin \names $.
The \emph{syntax} of a process calculus is usually defined by a context-free grammar defining operators, \ie functions $ \operatorname{op} : \names^n \times \proc^m \to \proc $. An operator of arity $ 0 $, \ie $ m = 0 $, is a \emph{constant}. The arguments that are again process terms are called \emph{subterms} of $ P $.

\begin{definition}[Subterms]
	\label{def:subterms}
	Let $ \ProcCal{\proc}{\step} $ be a process calculus and $ P \in \proc $. The set of \emph{subterms} of $ P = \operatorname{op}\left( x_1, \ldots, x_n, P_1, \ldots, P_m \right) $ is defined recursively as $ \Set{ P } \cup \Set{ P' \mid \exists i \in \Set{ 1, \ldots, m } \logdot P' \text{ is a subterm of } P_i } $.
\end{definition}

\noindent
With Definition~\ref{def:subterms}, every term is a subterm of itself and constants have no further subterms.
We add the special constant $ \success $ to each process calculus. Its purpose is to denote \emph{success} (or \emph{successful termination}) which allows us to compare the abstract behaviour of terms in different process calculi as described in Section~\ref{sec:encodings}.
Therefore, we require that each language defines a predicate $ P\hasSuccess $ that holds if the term $ P $ is successful (or has terminated successfully). Usually, this predicate holds if $ P $ contains an occurrence of $ \success $ that is \emph{unguarded} (see mobile ambients below).

A \emph{scope} defines an area in which a particular name is known and can be used. For several reasons, it can be useful to restrict the scope of a name. For instance to forbid interaction between two processes or with an unknown and, hence, potentially untrusted environment. Names whose scope is restricted such that they cannot be used beyond their scope are called \emph{bound names}. The remaining names are called \emph{free names}.
As ususal, we define three sets of names occurring in a process term: the set $ \Names{P} $ of all of $ P $'s names, and its subsets $ \FreeNames{P} $ of free names and $ \BoundNames{P} $ of  bound names. In the case of bound names, their syntactical representation as lower case letters serves as a place holder for any fresh name, \ie any name that does not occur elsewhere in the term. To avoid confusion between free and bound names or different bound names, bound names can be replaced with fresh names by \emph{{\alphaConv}}. We write $ P \equiv_\alpha Q $ if $ P $ and $ Q $ differ only by {\alphaConv}.

We assume that the \emph{semantics} is given as an \emph{operational semantics} consisting of inference rules defined on the operators of the language \cite{plotkin04}. For many process calculi, the semantics is provided in two forms, as \emph{reduction semantics} and as \emph{labelled transition semantics}. We assume that at least the reduction semantics $ \step $ is given as part of the definition, because its treatment is easier in the context of encodings.
A single application of the reduction semantics is called a \emph{(reduction) step} and is written as $ P \step P' $. If $ P \step P' $, then $ P' $ is called \emph{derivative} of $ P $. Let $ P \step $ (or $ P \noStep $) denote the existence (absence) of a step from $ P $, and let $ \steps $ denote the reflexive and transitive closure of $ \step $. A sequence of reduction steps is called a \emph{reduction}.
We write $ P \step^{\omega} $ if $ P $ has an infinite sequence of steps and call $ P $ \emph{convergent} if $ \neg\left( P \step^{\omega} \right) $. We also use \emph{{execution}} to refer to a reduction starting from a particular term. A \emph{maximal execution} of a process $ P $ is a reduction starting from $ P $ that cannot be further extended, \ie that is either infinite or of the form $ P \steps P' \noStep $.

We extend the predicate $ P\hasSuccess $ to reachability of success. A term $ P \in \proc $ reaches success, written as $ P\reachSuccess $, if it reaches a derivative that is successful, \ie $ P\reachSuccess \deff \exists P' \logdot P \steps P' \wedge P'\hasSuccess $.
We write $ P\mustReachSuccessFinite $, if $ P $ reaches success in every finite maximal execution.

To reason about environments of terms, we use functions on process terms called contexts. More precisely, a \emph{context} $ \Context{}{}{\hole_1, \ldots, \hole_n} : \proc^n \to \proc $ with $ n $ holes is a function from $ n $ terms into a term, \ie given $ P_1, \ldots, P_n \in \proc $, the term $ \Context{}{}{P_1, \ldots, P_n} $ is the result of inserting $ P_1, \ldots, P_n $ in the corresponding order into the $ n $ holes of~$ \context $.

We assume the calculi \piMix for the standard \piCal (with mixed choice) as defined in \cite{milnerParrowWalker92} and its subcalculi the \piCal with only separate choice (\piSep), \ie there all parts of the same choice construct are either all guarded by an input or all guarded by an output prefix, and the asynchronous \piCal (\piAsyn) as introduced in \cite{boudol92, hondaTokoro91}.
Moreover, we assume the \joinCal (\join) as introduced in \cite{fournetGonthier96}.

\begin{definition}[Syntax, \cite{petersNestmannGoltz13}]
	\label{def:syntax}
	The sets of process terms are given by
	\begin{center}
		$ \begin{array}{ll}
			\procMix \deffTerms & P_1 \mid P_2 \sep \success \sep \Res{n}{P} \sep !P \sep \sum_{i \in \indexSet} \guard_i.P_i\\
				& \guard \deffTerms \Output{y}{z} \sep \Input{y}{x} \sep \tau \vspace*{0.5em}\\
			\procSep \deffTerms & P_1 \mid P_2 \sep \success \sep \Res{n}{P} \sep !P \sep \sum_{i \in \indexSet} \guard^O_i.P_i \sep \sum_{i \in \indexSet} \guard^I_i.P_i\\
				& \guard^O \deffTerms \Output{y}{z} \sep \tau \quad \text{ and } \quad
				\guard^I \deffTerms \Input{y}{x} \sep \tau \vspace*{0.5em}\\
			\procAsyn \deffTerms & \nullTerm \sep P_1 \mid P_2 \sep \success \sep \!\Res{n}{P} \sep !P \sep \Output{y}{z}\! \sep \Input{y}{x}.P \sep \!\tau.P \vspace*{0.5em}\\
			\procJoin \deffTerms & \nullTerm \sep P_1 \mid P_2 \sep \success \sep \JOutput{y}{z} \sep \JDefShort{D}{P}\\
				& J \deffTerms \JInput{y}{x} \sep J_1 \mid J_2 \quad \text{ and } \quad
				D \deffTerms J \triangleright P \sep D_1 \wedge D_2 
		\end{array} $
	\end{center}
	for some names $ n, x, y, z \in \names $ and a finite index set $ \indexSet $.
\end{definition}

In all languages the \emph{empty process} is denoted by $ \nullTerm $ and $ P_1 \mid P_2 $ defines \emph{parallel composition}. Within the \piCali \emph{restriction} $ \Res{n}{P} $ restricts the scope of the name $ n $ to the definition of $ P $ and $ !P $ denotes \emph{replication}. The process term $ \sum_{i \in \indexSet} \guard_i.P_i $ represents \emph{finite guarded choice}; as usual, the sum $ \sum_{i \in \Set{ 1, \ldots, n }} \guard_i.P_i $ is sometimes written as $ \guard_1.P_1 + \ldots + \guard_n.P_n $ and $ \nullTerm $ abbreviates the empty sum, \ie where $ \indexSet = \emptyset $. The input prefix $ \Input{y}{x} $ is used to describe the ability of receiving the value $ x $ over link $ y $ and, analogously, the output prefix $ \Output{y}{z} $ describes the ability to send a value $ z $ over link $ y $. The prefix $ \tau $ describes the ability to perform an internal, not observable action. The choice operators of \piMix and \piSep require that all branches of a choice are guarded by one of these prefixes. We omit the match prefix, because it does not influence the results.

In $ \procJoin $ the operator $ \JOutput{y}{z} $ describes an output prefix similar to $ \procAsyn $.
A \emph{definition} $ \JDefShort{D}{P} $ defines a new receiver on fresh names, where $ D $ consists of one or several elementary definitions $ J \triangleright P $ connected by $ \wedge $, $ J $ potentially joins several reception patterns $ \JInput{y}{x} $ connected by $ \mid $, and $ P $ is a process. Note that $ \JDefShort{D}{P} $ unifies the concepts of restriction, input prefix, and replication of the \piCal.

As usual, the continuation $ \nullTerm $ is often omitted, so \eg $ \Input{y}{x}\!.\nullTerm $ becomes $ \Input{y}{x} $.
In addition, for simplicity in the presentation of examples, we sometimes omit an action's object when it does not effectively contribute to the behaviour of a term, \eg $ \Input{y}{x}.\nullTerm $ is written as $ \In{y}.\nullTerm $ or just $ \In{y} $, and $ \JDef{\JInput{y}{x}}{\nullTerm}{\JOutput{y}{z}} $ is abbreviated as $ \JDef{\JIn{y}}{\nullTerm}{\JOut{y}} $.
Moreover, let $ \Res{\tilde{x}}{P} $ abbreviate the term $ \Res{x_1}{\ldots \Res{x_n}{P}} $.

The definitions of free and bound names are completely standard, \ie names are bound by restriction and as parameter of input and $ \Names{P} = \FreeNames{P} \cup \BoundNames{P} $ for all $ P $. In the \joinCal the definition $ \JDefShort{D}{P} $ binds for all elementary definitions $ J_i \triangleright P_i $ in $ D $ and all join pattern $ \JInput{y_{i, j}}{x_{i, j}} $ in $ J_i $ the \emph{received variables} $ x_{i, j} $ in the corresponding $ P_i $ and the \emph{defined variables} $ y_{i, j} $ in $ P $.

To compare process terms, process calculi usually come with different well-studied equivalence relations (see \cite{vanglabbeek01} 
for an overview). A special kind of equivalence with great importance to reason about processes are \emph{congruences}, \ie the closure of an equivalence with respect to contexts.
Process calculi usually come with a special congruence $ {\equiv} \mathrel{\subseteq} {\proc \times \proc} $ called \emph{structural congruence}. Its main purpose is to equate syntactically different process terms that model quasi-identical behaviour. For the above variants of the \piCal we have:
\begin{center}
	$ P \equiv Q \quad \text{ if } P \equiv_\alpha Q \hspace*{2em}
	P \mid \nullTerm \equiv P \hspace*{2em}
	P \mid Q \equiv Q \mid P \hspace*{2em}
	P \mid \left( Q \mid R \right) \equiv \left( P \mid Q \right) \mid R \hspace*{2em}
	!P \equiv P \mid !P $\\
	$ \Res{n}{\nullTerm} \equiv \nullTerm \hspace*{2em}
	\Res{n}{\Res{m}{P}} \equiv \Res{m}{\Res{n}{P}} \hspace*{2em}
	P \mid \Res{n}{Q} \equiv \Res{n}{\left( P \mid Q \right)} \quad \text{ if } n \notin \FreeNames{P} $
\end{center}
The entanglement of input prefix and restriction within the definition operator of the \joinCal limits the flexibility of relations defined by sets of equivalence equations. Instead structural congruence is given by an extension of the chemical approach in \cite{berryBoudol90} by the heating and cooling rules. They operate on so-called solutions $ \mathcal{R} \; \vdash \; \mathcal{M} $, where $ \mathcal{R} $ and $ \mathcal{M} $ are multisets. We have
\begin{inparaenum}[(1)]
	\item $ \vdash P \mid Q \rightleftharpoons \; \vdash P, Q $,
	\item $ D \wedge E \vdash \; \rightleftharpoons D, E \vdash $, and
	\item $ \vdash \JDefShort{D}{P} \rightleftharpoons \sigma_{dv}(D) \vdash \sigma_{dv}(P) $,
\end{inparaenum}
where only elements|separated by commas|that participate in the rule are mentioned and $ \sigma_{dv} $ instantiates the defined variables in $ D $ to distinct fresh names. Then $ P \equiv Q $ if $ P $ and $ Q $ differ only by applications of the $ \rightleftharpoons $-rules, \ie if $ \vdash P \rightleftharpoons \; \vdash Q $.

The semantics of the above variants of the \piCal is given by the axioms
\begin{center}
	$ \left( \ldots + \tau.P + \ldots \right) \step P \hspace*{1.4em} \left( \ldots + \Input{y}{x}.P + \ldots \right) \mid \left( \ldots + \Output{y}{z}.Q + \ldots \right) \step \Set{ \Subst{z}{x} }P \mid Q $
\end{center}
for \piMix and \piSep, the axioms $ \tau.P \step P $ and $ \Input{y}{x}.P \mid \Output{y}{z} \step \Set{ \Subst{z}{x} }P $ for \piAsyn, and the three rules
\begin{center}
	$ \dfrac{P \step P'}{P \mid Q \step P' \mid Q} \hspace*{2em} \dfrac{P \step P'}{\Res{n}{P} \step \Res{n}{P'}} \hspace*{2em} \dfrac{P \equiv Q \quad \quad Q \step Q' \quad Q' \equiv P'}{P \step P'} $
\end{center}
that hold for all three variants \piMix, \piSep, and \piAsyn.
The operational semantics of \join is given by the heating and cooling rules (see structural congruence) and the reduction rule $ J \triangleright P \vdash \sigma_{rv}(J) \step J \triangleright P \vdash \sigma_{rv}(P) $, where $ \sigma_{rv} $ substitutes the transmitted names for the distinct received variables.

Recursion or replication distinguishes itself from other operators by the fact that (one of) its subterms can be copied within rules of structural congruence in the \piCal or by reduction rules in the \joinCal while the operator itself is usually never removed during reductions. We call such operators and capabilities \emph{recurrent}.
We denote the parts of a term that are removed in reduction steps as \emph{capabilities}.

\subsection{Mobile Ambients}
\label{sec:mobileAmbients}

\emph{Mobile ambients} (\MA) were introduced in \cite{cardelliGordon98, cardelliGordon00} as a process calculus for distributed systems with mobile computations. They define \emph{ambients} as bounded places on that computations may happen and that can be moved (with their computations).
Their syntax is defined in two stages: the first stage describes \emph{ambient processes} and the nesting of ambients; the second stage describes the movements of ambients.

\begin{definition}[Syntax, \cite{cardelliGordon00}]
	The set of \emph{ambient processes} $ \procMA $ is given as
	\begin{center}
		$ \begin{array}{ll}
			\procMA \deffTerms & \nullTerm \sep P_1 \mid P_2 \sep \success \sep \maRes{n}{P} \sep \maRep{P} \sep \maLoc{n}{P} \sep M.P\\
				& M \deffTerms \maIn{n} \sep \maOut{n} \sep \maOpen{n}
		\end{array} $
	\end{center}
	for some names $ n \in \names $.
\end{definition}

The \emph{empty process} is denoted by $ \nullTerm $ and $ P_1 \mid P_2 $ define \emph{parallel composition}. \emph{Restriction} $ \maRes{n}{P} $ restricts the scope of the name $ n $ to the definition of $ P $.  \emph{Replication} $ !P $ provides potentially infinitely many copies of $ P $.
The $ \maLoc{n}{P} $ describes an \emph{ambient} $ n $ in which the process $ P $ is located. Ambients may exhibit a tree structure induced by the nesting of ambient brackets. The term $ M.P $ defines the exercise of capability $ M $, which could be either ``$ \maIn{n} $'' to \emph{enter} ambient $ n $, or ``$ \maOut{n} $'' to \emph{exit} from ambient $ n $, or ``$\maOpen{n} $'' to \emph{open} ambient~$ n $.
As usual, the continuation $ \nullTerm $ is often omitted.
Moreover, we often abbreviate $ \maLoc{n}{\nullTerm} $ by $ \maLoc{n}{} $
and let $ \maRes{\tilde{x}}{P} $ abbreviate the term $ \maRes{x_1}{\ldots \maRes{x_n}{P}} $.

Restriction is the only binder of mobile ambients, \ie the names are bound by restriction and all names of a process that are not bound by restriction are free.
The ``$ . $'' in $ M.P $ denotes sequential composition, where the $ M $ guards the subterm $ P $.
A subterm of a process is unguarded if it is not hidden behind a guard.
As usual, $ P\hasSuccess $ if $ P $ contains an unguarded occurrence of success.

For mobile ambients, \cite{cardelliGordon00} define structural congruence as the least congruence that satisfies the rules of $ \equiv $ defined above and additionally the rules $ !\nullTerm \equiv \nullTerm $ and $ \maRes{n}{\left( \maLoc{m}{P} \right)} \equiv \maLoc{m}{\maRes{n}{P}} $ if $ n \neq m $.

The reduction semantics of mobile ambients in \cite{cardelliGordon00} consists of the axioms
\begin{center}
	$ \maLoc{n}{\maIn{m}.P \mid Q} \mid \maLoc{m}{R} \step \maLoc{m}{\maLoc{n}{P \mid Q} \mid R} $ \vspace*{0.3em}\\
	$ \maLoc{m}{\maLoc{n}{\maOut{m}.P \mid Q} \mid R} \step \maLoc{n}{P \mid Q} \mid \maLoc{m}{R} \hspace*{2em} \maOpen{n}.P \mid \maLoc{n}{Q} \step P \mid Q $
\end{center}
and the rules:
\begin{center}
	$ \dfrac{P \step P'}{\maRes{n}{P} \step \maRes{n}{P'}} \hspace*{2em} \dfrac{P \step P'}{\maLoc{n}{P} \step \maLoc{n}{P'}} \hspace*{2em} \dfrac{P \step P'}{P \mid R \step P' \mid R} \hspace*{2em} \dfrac{P \equiv Q \quad \quad Q \step Q' \quad Q' \equiv P'}{P \step P'} $
\end{center}

The first axiom moves an ambient $ n $ with all its content (except for the consumed $ \maIn{m}{} $-capability) into a sibling ambient with name $ m $, where it is composed in parallel to the content of $ m $.
The second axiom allows an ambient $ n $ with all its content (except for the consumed $ \maOut{m}{} $-capability) to exit its parent ambient $ m $. As result ambient $ n $ is placed in parallel to $ m $.
The third axiom dissolves the boundary of an ambient named $ n $ that is located at the same level as the $ \maOpenAct $-capability.
The next three rules propagate reduction across scopes, ambient nesting, and parallel composition.
By the last rule reductions are defined modulo structural congruence.

Note that \cite{cardelliGordon00} explicitly states, that the same name can be used to name different ambients, \ie ambients with separate identities.
Moreover, if there are several ambients with the same name at the same hierarchical level all $ \maInAct $ and $ \maOpenAct $-capabilities that affect an ambient with this name can chose freely (non-deterministically) between the alternatives.

Following \cite{petersNestmannGoltz13}, we denote the operator $ !P $ for replication as recurrent, because (in contrast to the other operators) it is itself never removed during reductions.
Similarly, we denote an ambient that is not opened or moved in a step as recurrent for this step and, otherwise, as non-recurrent \wrt this step.
To distinguish between different occurrences of syntactically the same subterm in a term, we assume that all capabilities of processes in the following are implicitly labelled as described in \cite{petersNestmannGoltz13}.

\subsection{Encodings and Quality Criteria}
\label{sec:encodings}

Let $ \sourceLang = \ProcCal{\procSource}{\stepSource} $ and $ \targetLang = \ProcCal{\procTarget}{\stepTarget} $ be two process calculi, denoted as \emph{source} and \emph{target language}. An \emph{encoding} from $ \sourceLang $ into $ \targetLang $ is a function $ \arbitraryEncoding : \procSource \to \procTarget $. We often use $ S, S', S_1, \ldots $ to range over $ \procSource $ and $ T, T', T_1, \ldots $ to range over $ \procTarget $. Encodings often translate single source term steps into a sequence or pomset of target term steps. We call such a sequence or pomset an \emph{emulation} of the corresponding source term step.

To analyse the quality of encodings and to rule out trivial or meaningless encodings, they are augmented with a set of quality criteria. In order to provide a general framework, Gorla in \cite{gorla10} suggests five criteria well suited for language comparison. They are divided into two structural and three semantic criteria. The structural criteria include
\begin{inparaenum}[(1)]
	\item \emph{compositionality} and
	\item \emph{name invariance}. The semantic criteria include
	\item \emph{operational correspondence},
	\item \emph{divergence reflection}, and
	\item \emph{success sensitiveness}.
\end{inparaenum}
It turns out that we do not need the second criterion to derive the separation results of this paper. Thus, we omit it.
Note that a behavioural equivalence $ \asymp $ on the target language is assumed for the definition of name invariance and operational correspondence. 
Moreover, let $ \varphi : \names \to \names^k $ be a \emph{renaming policy}, \ie a mapping from a name to a vector of names that can be used by encodings to reserve special names, such that no two different names are translated into overlapping vectors of names.

Intuitively, an encoding is compositional if the translation of an operator is the same for all occurrences of that operator in a term. Hence, the translation of that operator can be captured by a context that is allowed in \cite{gorla10} to be parametrised on the free names of the respective source term.

\begin{definition}[Compositionality, \cite{gorla10}]
	\label{def:compositionality}
	The encoding $ \arbitraryEncoding $ is \emph{compositional} if, for every operator $ \mathbf{op} : \names^n \times \procSource^m \to \procSource $ of $ \sourceLang $ and for every subset of names $ N $, there exists a context $ \Context{N}{\mathbf{op}}{\hole_1, \ldots , \hole_{n + m}} : \names^n \times \procSource^m \to \procTarget $ such that, for all $ x_1, \ldots, x_n \in \names $ and all $ S_1, \ldots, S_m \in \procSource $ with $ \FreeNames{S_1} \cup \ldots \cup \FreeNames{S_m} = N $, it holds that $ \ArbitraryEncoding{\mathbf{op}\left( x_1, \ldots, x_n, S_1, \ldots, S_m \right)} = \Context{N}{\mathbf{op}}{\varphi\!\left( x_1 \right), \ldots, \varphi\!\left( x_n \right), \ArbitraryEncoding{S_1}, \ldots, \ArbitraryEncoding{S_m}} $.
\end{definition}

The first semantic criterion is operational correspondence. It consists of a soundness and a completeness condition. \emph{Completeness} requires that every computation of a source term can be emulated by its translation. \emph{Soundness} requires that every computation of a target term corresponds to some computation of the corresponding source term.

\begin{definition}[Operational Correspondence, \cite{gorla10}]
	\label{def:operationalCorrespondence}
	The encoding $ \arbitraryEncoding $ satisfies \emph{operational correspondence} if it satisfies:
	\begin{center}
		{\tabcolsep3pt
		\begin{tabular}{ll}
			\emph{Completeness}: & For all $ S \stepsSource S' $, it holds $ \ArbitraryEncoding{S} \stepsTarget \asymp \ArbitraryEncoding{S'} $.\\
			\emph{Soundness}: & For all $ \ArbitraryEncoding{S} \stepsTarget T $, there exists an $ S' $ such that $ S \stepsSource S' $ and $ T \stepsTarget \asymp \ArbitraryEncoding{S'} $.
		\end{tabular}}
	\end{center}
\end{definition}

\noindent
The definition of operational correspondence relies on the equivalence $ \asymp $ to get rid of junk possibly left over within computations of target terms. Sometimes, we refer to the completeness criterion of operational correspondence as \emph{operational completeness} and, accordingly, for the soundness criterion as \emph{operational soundness}.

The next criterion concerns the role of infinite computations in encodings.

\begin{definition}[Divergence Reflection, \cite{gorla10}]
	\label{def:divergenceReflection}
	The encoding $ \arbitraryEncoding $ \emph{reflects divergence} if, for every source term $ S $, $ \ArbitraryEncoding{S} \stepTarget^{\omega} $ implies $ S \stepSource^{\omega} $.
\end{definition}

The last criterion links the behaviour of source terms to the behaviour of their encodings. With Gorla \cite{gorla10}, we assume a \emph{success} operator $ \success $ as part of the syntax of both the source and the target language. Since $ \success $ cannot be further reduced and $ \Names{\success} = \FreeNames{\success} = \BoundNames{\success} = \emptyset $, the semantics and structural congruence of a process calculus are not affected by this additional constant operator.
We choose may-testing to test for the reachability of success, \ie $ P\reachSuccess \deff \exists P' \logdot P \steps P' \wedge P'\hasSuccess $. However, this choice is not crucial. An encoding preserves the abstract behaviour of the source term if it and its encoding answer the tests for success in exactly the same way.

\begin{definition}[Success Sensitiveness, \cite{gorla10}]
	\label{def:successSensitiveness}
	The encoding $ \arbitraryEncoding $ is \emph{success-sensitive} if, for every source term $ S $, $ S\reachSuccess $ iff $ \ArbitraryEncoding{S}\reachSuccess $.
\end{definition}

\noindent
This criterion only links the behaviours of source terms and their literal translations, but not of their derivatives. To do so, Gorla relates success sensitiveness and operational correspondence by requiring that the equivalence on the target language never relates two processes with different success behaviours.

\begin{definition}[Success Respecting, \cite{gorla10}]
	\label{def:asympSuccessRespecting}
	$ \asymp $ is \emph{success respecting} if, for every $ P $ and $ Q $ with $ P\reachSuccess $ and $ Q\notReachSuccess $, it holds that $ P \not\asymp Q $.
\end{definition}

\noindent
By \cite{gorla10} a ``good'' equivalence $ \asymp $ is often defined in the form of a barbed equivalence (as described e.g. in \cite{milnerSangiorgi92}) or can be derived directly from the reduction semantics and is often a congruence, at least with respect to parallel composition. For the separation results presented in this paper, we require only that $ \asymp $ is a success respecting reduction bisimulation.

\begin{definition}[(Weak) Reduction Bisimulation]
	\label{def:asympReductionBisim}
	The equivalence $ \asymp $ is a \emph{(weak) reduction bisimulation} if, for every $ T_1, T_2 \in \procTarget $ such that $ T_1 \asymp T_2 $, for all $ T_1 \stepsTarget T_1' $ there exists a $ T_2' $ such that $ T_2 \stepsTarget T_2' $ and $ T_1' \asymp T_2' $.
\end{definition}

Note that the best known encoding from the asynchronous \piCal into the \joinCal in \cite{fournetGonthier96} is not compositional, but consists of an inner, compositional encoding surrounded by a fixed context|the implementation of so-called firewalls|that is parametrised on the free names of the source term. In order to capture this and similar encodings and as done in \cite{petersNestmannGoltz13} we relax the definition of compositionality in our notion of a good encoding.

\begin{definition}[Good Encoding]
	\label{def:goodEncoding}
	We consider an encoding $ \arbitraryEncoding $ to be \emph{good} if it is
	\begin{inparaenum}[(1)]
		\item either compositional or consists of an inner, compositional encoding surrounded by a fixed context that can be parametrised on the free names of the source term,
		\item satisfies operational correspondence,
		\item reflects divergence, and
		\item is success-sensitive.
	\end{inparaenum}
	Moreover we require that the equivalence $ \asymp $ is a success respecting (weak) reduction bisimulation.
\end{definition}

In this case a good encoding respects also the ability to reach success in all finite maximal executions.

\begin{lemma}[\cite{petersNestmannGoltz13TR}]
	\label{lem:mustSuccessRespecting}
	For all success respecting reduction bisimulations $ \asymp $ and all convergent target terms $ T_1, T_2 $ such that $ T_1 \asymp T_2 $, it holds $ T_1\mustReachSuccessFinite $ iff $ T_2\mustReachSuccessFinite $.
\end{lemma}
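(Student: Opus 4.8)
The plan is to prove the two implications separately; since $\asymp$ is an equivalence and hence symmetric, and the hypotheses on $T_1$ and $T_2$ are symmetric, it suffices to show that $T_1\mustReachSuccessFinite$ implies $T_2\mustReachSuccessFinite$. Recall that, because $T_1$ and $T_2$ are convergent, \emph{all} of their maximal executions are finite, so here $\mustReachSuccessFinite$ means that every maximal execution passes through a successful state. I would argue by contradiction: assuming $\neg\left( T_2\mustReachSuccessFinite \right)$, I produce a maximal execution of $T_1$ that avoids success, contradicting $T_1\mustReachSuccessFinite$.

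Before the main argument I would record one auxiliary fact, namely that the predicate $\hasSuccess$ is \emph{preserved} by reduction: if $P\hasSuccess$ and $P \step P'$ then $P'\hasSuccess$, and consequently the same holds along $P \steps P'$. This is justified by the inertness of $\success$ already noted in the preliminaries---$\success$ is never reduced, and a reduction elsewhere in the term neither removes nor guards an already unguarded occurrence of $\success$. From this I obtain a convenient reformulation: for a convergent term $P$, one has $P\mustReachSuccessFinite$ iff every reachable deadlock $Q$ (\ie $P \steps Q \noStep$) is successful, since persistence lets me push any successful state encountered on a maximal execution forward to its terminal deadlock.

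For the contradiction, assume $\neg\left( T_2\mustReachSuccessFinite \right)$. Then there is a finite maximal execution $T_2 \steps U \noStep$ no state of which is successful; in particular $U$ is not successful, and since $U\noStep$ this gives $U\notReachSuccess$. Applying the (symmetric) weak reduction bisimulation to $T_1 \asymp T_2$ and the reduction $T_2 \steps U$ yields a term $V$ with $T_1 \steps V$ and $U \asymp V$. Since $\asymp$ is success respecting, $U\notReachSuccess$ forces $V\notReachSuccess$. Now $V$ is convergent (being reachable from the convergent $T_1$), so it has a finite maximal execution $V \steps W \noStep$; prepending $T_1 \steps V$ gives a finite maximal execution $T_1 \steps V \steps W \noStep$ of $T_1$. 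By $T_1\mustReachSuccessFinite$ this execution reaches success, and by persistence its terminal deadlock $W$ is then successful; but $V \steps W$ together with $W$ successful yields $V\reachSuccess$, contradicting $V\notReachSuccess$.

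The routine part of this argument is the bisimulation chase, which only uses that a weak reduction bisimulation lets a (single- or multi-step) reduction be matched by a reduction and that success respecting transfers $\reachSuccess$ across $\asymp$. The real obstacle---and the step I would isolate as a preliminary lemma---is the persistence of success: without it, the successful state guaranteed by $T_1\mustReachSuccessFinite$ could occur strictly on the segment $T_1 \steps V$, \ie before the matched deadlock, and would not contradict $V\notReachSuccess$. It is precisely persistence (equivalently, the deadlock reformulation of $\mustReachSuccessFinite$) that lets me transport success past $V$ to the terminal $W$ and close the argument.
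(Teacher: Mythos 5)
Your proof is correct and follows essentially the argument used for this lemma in \cite{petersNestmannGoltz13TR} (the present paper only imports the statement without proof): match the success-free deadlocked endpoint of the offending maximal execution of $T_2$ across the bisimulation, transfer $\notReachSuccess$ via success-respecting, extend to a finite maximal execution of the convergent $T_1$, and contradict $T_1\mustReachSuccessFinite$. The persistence of $\hasSuccess$ under reduction, which you rightly isolate as the crucial auxiliary fact, is exactly what lets you push the guaranteed successful state forward to the terminal deadlock; just be aware that it is a property of the concrete calculi considered here (reductions only ever discard guarded subterms, and a guarded $\success$ is not unguarded) rather than a consequence of the abstract definitions, so it enters as a standing assumption on the constant $\success$ rather than something provable at this level of generality.
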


Then success sensitiveness preserves the ability to reach success in all finite maximal executions.

\begin{lemma}[\cite{petersNestmannGoltz13TR}]
	\label{lem:mustSuccessSensitiveness}
	For all operationally sound, divergence reflecting, and success-sensitive encodings $ \arbitraryEncoding $ with respect to some success respecting equivalence $ \asymp $ and for all convergent source terms $ S $, if $ S\mustReachSuccessFinite $ then $ \ArbitraryEncoding{S}\mustReachSuccessFinite $.
\end{lemma}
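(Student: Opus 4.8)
The plan is to trace an arbitrary maximal computation of $\ArbitraryEncoding{S}$ back to the source through operational soundness, to argue that its (necessarily reachable) source counterpart reaches success, and then to transport this back along the success respecting equivalence $\asymp$ and success sensitiveness. First I would use divergence reflection to turn convergence of the source into convergence of its translation: since $\neg\left( S \stepSource^{\omega} \right)$, the contrapositive of divergence reflection yields $\neg\left( \ArbitraryEncoding{S} \stepTarget^{\omega} \right)$, so every maximal execution of $\ArbitraryEncoding{S}$ is finite and it suffices to show that each such execution reaches success.

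Second, I would establish the source-side fact that \emph{every} reachable source term reaches success: for $S \stepsSource S'$, I extend $S'$ to a finite maximal execution $S' \stepsSource S'' \noStep$ (possible because $S'$ inherits convergence from $S$), obtaining a finite maximal execution $S \stepsSource S'' \noStep$. By $S\mustReachSuccessFinite$ this execution passes through a successful term $U$. If $U$ lies on the suffix $S' \stepsSource S''$, then $S'\reachSuccess$ directly. If $U$ lies strictly before $S'$, then I invoke that success is \emph{persistent} under reduction in the calculi considered---an unguarded occurrence of $\success$ can neither be guarded nor discarded by a reduction step---so from $U\hasSuccess$ and $U \stepsSource S'$ I obtain $S'\hasSuccess$. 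In either case $S'\reachSuccess$.

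Third, given any maximal (hence finite) execution $\ArbitraryEncoding{S} \stepsTarget T \noStep$, operational soundness provides a source term $S'$ with $S \stepsSource S'$ and $T \stepsTarget \asymp \ArbitraryEncoding{S'}$; since $T \noStep$, the target reduction is empty and this collapses to $T \asymp \ArbitraryEncoding{S'}$. By the previous step $S'\reachSuccess$, hence $\ArbitraryEncoding{S'}\reachSuccess$ by success sensitiveness, hence $T\reachSuccess$ because $\asymp$ is success respecting, and finally $T\hasSuccess$: as $T \noStep$, the only derivative of $T$ is $T$ itself, so $T\reachSuccess$ already means $T\hasSuccess$. Thus the chosen execution reaches success at its endpoint, and since it was arbitrary we conclude $\ArbitraryEncoding{S}\mustReachSuccessFinite$.

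The main obstacle is the second step: operational soundness delivers only \emph{some} reachable $S'$, whereas $S\mustReachSuccessFinite$ a priori guarantees merely that the witnessing successful state lies \emph{somewhere} along a completed computation, possibly strictly before $S'$. Bridging this gap is exactly where persistence of success is needed; without it---for a source calculus in which an unguarded $\success$ could be consumed---the statement would in fact fail, so the argument genuinely hinges on this structural property of mobile ambients and the $\pi$-calculus variants. Note that only the success respecting property of $\asymp$ is exploited above; one could instead phrase the final, leaf-level step through Lemma~\ref{lem:mustSuccessRespecting}, but that would additionally require $\asymp$ to be a reduction bisimulation and is not necessary for this proof.
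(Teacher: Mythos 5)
Your proof is correct and follows essentially the same route as the paper's own argument (which is deferred to \cite{petersNestmannGoltz13TR}): trace an arbitrary finite maximal target execution $\ArbitraryEncoding{S} \stepsTarget T \noStep$ back to some reachable $S'$ via operational soundness, lift $S'\reachSuccess$ through success sensitiveness and the success-respecting property of $\asymp$ to get $T\reachSuccess$, and use $T \noStep$ to conclude $T\hasSuccess$. You are right to flag that the passage from ``the completed source execution through $S'$ reaches success'' to ``$S'\reachSuccess$'' rests on persistence of unguarded $\success$ under reduction, a structural property that holds in all calculi considered here but is not among the lemma's explicitly stated hypotheses.
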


\subsection{Distributability and Synchronisation Pattern}
\label{sec:distributability}

Intuitively, a distribution of a process means the extraction (or: separation) of its (sequential) components and their association to different locations.
However, not all process calculi in the literature---as \eg the standard \piCal in \cite{milnerParrowWalker92}---consider locations explicitly. For the calculi without an explicit notion of location \cite{petersNestmannGoltz13} defines a general notion of \emph{distributability} that focuses on the possible division of a process term into components.
Accordingly, a process $ P $ is distributable into $ P_1 , \ldots , P_n $, if we find some distribution that extracts $ P_1 , \ldots , P_n $ from within $ P $ onto different locations.

\begin{definition}[Distributability, \cite{petersNestmannGoltz13}]
	\label{def:degreeOfDistributability}
	Let $ \ProcCal{\proc}{\step} $ be a process calculus, $ \equiv $ be its structural congruence, and $ P \in \proc $. $ P $ is \emph{distributable} into $ P_1, \ldots, P_n \in \proc $ if there exists $ P' \equiv P $ such that
	\begin{compactenum}
		\item for all $ 1 \leq i \leq n $, $ P_i $ contains at least one capability or constant different from $ \nullTerm $ and $ P_i $ is an unguarded subterm of $ P' $ or, in case $ \equiv $ is given by a chemical approach, $ \vdash P' \rightleftharpoons \mathcal{R} \vdash P_i, \mathcal{M} $ for some multisets $ \mathcal{R}, \mathcal{M} $,
		\item in $ P_1, \ldots, P_n $ there are no two occurrences of the same capability, \ie no label occurs twice, and
		\item each guarded subterm and each constant (different from $ \nullTerm $) of $ P' $ is a subterm of at least one of the terms $ P_1, \ldots, P_n $.
	\end{compactenum}
	The \emph{degree of distributability} of $ P $ is the maximal number of distributable subterms of $ P $.
\end{definition}

\noindent
Accordingly, a pi-term $ P $ is distributable into $ P_1, \ldots, P_n $ if $ P \equiv \Res{\tilde{a}}{\left( P_1 \mid \ldots \mid P_n \right)} $. The $ \procJoin $-term $ \JDef{\JIn{a}\,}{\nullTerm}{\left( \JDef{\JIn{b}\,}{\JOutput{c}{a}}{\left( \JOut{a} \mid \JOut{b} \right)} \right)} $ is distributable into $ \JDef{\JIn{a}\,}{\nullTerm}{\JOut{a}} $ and $ \JDef{\JIn{b}\,}{\JOutput{c}{a}}{\JOut{b}} $, but \eg also into $ \JDef{\JIn{a}\,}{\nullTerm}{\nullTerm} $, $ \JDef{\JIn{b}\,}{\JOutput{c}{a}}{\nullTerm} $, $ \JOut{a} $, and $ \JOut{b} $, because
$ \vdash \JDef{\JIn{a}\,}{\nullTerm}{\left( \JDef{\JIn{b}\,}{\JOutput{c}{a}}{\left( \JOut{a} \mid \JOut{b} \right)} \right)}
\rightleftharpoons \JDefShort{\JIn{a}\,}{\nullTerm}, \JDefShort{\JIn{b}\,}{\JOutput{c}{a}} \vdash \JOut{a} \mid \JOut{b}
\rightleftharpoons \JDefShort{\JIn{a}\,}{\nullTerm}, \JDefShort{\JIn{b}\,}{\JOutput{c}{a}} \vdash \JOut{a}, \JOut{b}
\rightleftharpoons \; \vdash \JDef{\JIn{a}\,}{\nullTerm}{\nullTerm}, \JDef{\JIn{b}\,}{\JOutput{c}{a}}{\nullTerm}, \JOut{a}, \JOut{b} $.

Mobile ambients come with an explicit notion of locations: ambients.
A term of $ \procMA $ is distributable into pairwise intersected subsets of its outermost ambients.
Applying the Definition~\ref{def:degreeOfDistributability} results into exactly these distributable components.
Because of the rule $ !P \equiv P \mid !P $, the replication of an ambient, \eg by $ !(\maLoc{n}{P}) $ or $ !(\maRes{n}{\maLoc{n}{P}}) $, is a distributable recurrent operation.

Preservation of distributability  means that the target term is at least as distributable as the source~term.

\begin{definition}[Preservation of Distributability, \cite{petersNestmannGoltz13}]
	\label{def:distributabilityPreservation}
	An encoding $ \arbitraryEncoding : \procSource \to \procTarget $ \emph{preserves distributability} if for every $ S \in \procSource $ and for all terms $ S_1, \ldots, S_n \in \procSource $ that are distributable within $ S $ there are some $ T_1, \ldots, T_n \in \procTarget $ that are distributable within $ \ArbitraryEncoding{S} $ such that $ T_i \asymp \ArbitraryEncoding{S_i} $ for all $ 1 \leq i \leq n $.
\end{definition}

In essence, this requirement is a distributability-enhanced adaptation of operational completeness.
It respects both the intuition on distribution as separation on different locations---an encoded source term is at least as distributable as the source term itself---as well as the intuition on distribution as independence of processes and their executions---implemented by $ T_i \asymp \ArbitraryEncoding{S_i} $.

If a single process---of an arbitrary process calculus---can perform two different steps, \ie steps on capabilities with different labels, then we call these steps alternative to each other. Two alternative steps can either be in conflict or not; in the latter case, it is possible to perform both of them in parallel, according to some assumed step semantics.

\begin{definition}[Distributable Steps, \cite{petersNestmannGoltz13}]
	\label{def:distributableSteps}
	Let $ \ProcCal{\proc}{\step} $ be a process calculus and $ P \in \proc $ a process. Two alternative steps of $ P $ are in \emph{conflict}, if performing one step disables the other step, \ie if both reduce the same not recurrent capability. Otherwise they are \emph{parallel}.
	Two parallel steps of $ P $ are \emph{distributable}, if each recurrent capability reduced by both steps is distributable, else the steps are \emph{local}.
\end{definition}

\noindent
Remember that the ``same'' means ``with the same label'', \ie in $ \left( \maOpen{n} \mid \maLoc{n}{P_1} \mid \maLoc{n}{P_2} \right) $ the two steps that open one of the ambients $ n $ are in conflict but $ \left( \maOpen{n} \mid \maLoc{n}{P_1} \mid \maOpen{n} \mid \maLoc{n}{P_2} \right) $ can perform two parallel steps---using different $ \maOpen $-capabilities and ambients---to open both ambients $ n $.

Next we define parallel and distributable sequences of steps.

\begin{definition}[Distributable Executions, \cite{petersNestmannGoltz13}]
	\label{def:distributableSequences}
	Let $ \ProcCal{\proc}{\step} $ be a process calculus, $ P \in \proc $, and let $ A $ and $ B $ denote two executions of $ P $. $ A $ and $ B $ are in \emph{conflict}, if a step of $ A $ and a step of $ B $ are in conflict, else $ A $ and $ B $ are \emph{parallel}.
	Two parallel sequences of steps $ A $ and $ B $ are \emph{distributable}, if each pair of a step of $ A $ and a step of $ B $ is distributable.
\end{definition}

Two executions of a term $ P $ are distributable iff $ P $ is distributable into two subterms such that each performs one of these executions. Hence, an operationally complete encoding is distributability-preserving only if it preserves the distributability of sequences of source term steps.

\begin{lemma}[Distributability-Preservation, \cite{petersNestmannGoltz13}]
	\label{lem:distributabilityPreservation}
	An operationally complete encoding $ \arbitraryEncoding : \procSource \to \procTarget $ that preserves distributability also preserves distributability of executions, \ie for all source terms $ S \in \procSource $ and all sets of pairwise distributable executions of $ S $, there exists an emulation of each execution in this set such that all these emulations are pairwise distributable in $ \ArbitraryEncoding{S} $.
\end{lemma}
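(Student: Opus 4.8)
The plan is to exploit the characterisation stated just before the lemma---that the executions of a term are distributable exactly when the term can be split into distributable subterms each performing one of them---as a bridge, and to thread the resulting source-level distribution through the encoding by combining preservation of distributability (Definition~\ref{def:distributabilityPreservation}) with operational completeness (Definition~\ref{def:operationalCorrespondence}) and the reduction-bisimulation property of $\asymp$ (Definition~\ref{def:asympReductionBisim}). The iff-characterisation is used in both directions: its left-to-right direction on the source side to obtain a distribution, and its right-to-left direction on the target side to recover distributability of the emulations.

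Concretely, fix a source term $S$ and a set of pairwise distributable executions $A_1, \ldots, A_n$ of $S$. First I would invoke the ($n$-ary version of the) characterisation to obtain a distribution of $S$ into subterms $S_1, \ldots, S_n$ such that $A_i$ is an execution of $S_i$; since each $A_i$ is confined to its component, it corresponds to a reduction $S_i \stepsSource S_i'$. Applying preservation of distributability to this distribution yields target terms $T_1, \ldots, T_n$ that are distributable within $\ArbitraryEncoding{S}$ and satisfy $T_i \asymp \ArbitraryEncoding{S_i}$ for every~$i$.

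Next I would produce the emulations. Operational completeness applied to each $S_i \stepsSource S_i'$ gives an emulation $\ArbitraryEncoding{S_i} \stepsTarget U_i$ with $U_i \asymp \ArbitraryEncoding{S_i'}$. Because $\asymp$ is a (symmetric) reduction bisimulation and $\ArbitraryEncoding{S_i} \asymp T_i$, this reduction transfers into $T_i$: there is some $T_i \stepsTarget T_i'$ with $U_i \asymp T_i'$, hence $T_i' \asymp \ArbitraryEncoding{S_i'}$. This reduction is an emulation of $A_i$ and, crucially, takes place entirely within the component $T_i$. Since $T_1, \ldots, T_n$ are distributable within $\ArbitraryEncoding{S}$ and each emulation is confined to its own $T_i$, the right-to-left direction of the characterisation---now read in the target calculus---yields that these $n$ emulations are pairwise distributable in $\ArbitraryEncoding{S}$, which is exactly the claim.

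The main obstacle I anticipate is the step that turns \emph{pairwise} distributable executions into a \emph{single} simultaneous split of $S$ into $n$ components: the characterisation quoted in the text is phrased for two executions, and lifting it to $n$ requires arguing that the pairwise-disjoint supports can be realised by one distribution at once. This is delicate precisely because Definition~\ref{def:distributableSteps} allows distributable steps to share a \emph{recurrent} capability (e.g.\ a replication or a recurrent ambient) provided that capability is itself distributable; the argument must therefore separate such shared recurrent resources consistently across all $n$ components. A secondary point to verify is that the transferred reductions $T_i \stepsTarget T_i'$ really stay confined to their components, so that no two emulations reduce a common non-recurrent capability---this rests on the ``no label occurs twice'' clause of Definition~\ref{def:degreeOfDistributability}, which guarantees that the non-recurrent capabilities of distinct $T_i$ are disjoint.
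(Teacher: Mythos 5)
Your proposal follows exactly the route the paper indicates: the characterisation of distributable executions as a distribution of the term into components each performing one of them, combined with preservation of distributability (Definition~\ref{def:distributabilityPreservation}), operational completeness, and the transfer of the resulting reductions along $\asymp$ as a reduction bisimulation. The paper imports this lemma from \cite{petersNestmannGoltz13} and only sketches this argument in the sentence preceding the statement, and your elaboration---including the caveat about lifting the pairwise characterisation to a single $n$-way split and about keeping each emulation confined to its component---matches that intended proof.
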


As described in the introduction, we consider a process calculus is distributable iff it does not contain a non-local \patternM.

\begin{definition}[Synchronisation Pattern \patternM, \cite{petersNestmannGoltz13}]
	\label{def:synchronisationPatternM}
	Let $ \ProcCal{\proc}{\step} $ be a process calculus and $ \PM \in \proc $ such that:
	\begin{compactenum}
		\item $ \PM $ can perform at least three alternative steps $ a\!\!: \PM \step P_a $, $ b\!\!: \PM \step P_b $, and $ c\!: \PM \step P_c $ such that $ P_a $, $ P_b $, and $ P_c $ are pairwise different.
		\item The steps $ a $ and $ c $ are parallel in $ \PM $.
		\item But $ b $ is in conflict with both $ a $ and $ c $.
	\end{compactenum}
	In this case, we denote the process $ \PM $ as \patternM.
	If the steps $ a $ and $ c $ are distributable in $ \PM $, then we call the \patternM \emph{non-local}. Otherwise, the \patternM is called \emph{local}.
\end{definition}

As shown in \cite{petersNestmannGoltz13}, all \patternM in the \joinCal (\join) are local but the asynchronous \piCal (\piAsyn) contains the non-local \patternM: $ \Output{y}{u} \mid \Input{y}{x}.P_1 \mid \Output{y}{v} \mid \Input{y}{x}.P_2 $ with $ P_1, P_2 \in \procAsyn $, where the steps $ a $, $ b $, and $ c $ are the reduction of the first out- and input, the first input and the second output, and the second out- and input, respectively.
Because of that, there is no good and distributability-preserving encoding from \piAsyn into \join.
To further distinguish different variants of the \piCal, \cite{petersNestmannGoltz13} introduces a second synchronisation pattern called \patternGreatM.
Interestingly, it reflects a well-known standard problem in the area of distributed systems, namely the problem of the dining philosophers \cite{dijkstra71}.

\begin{definition}[Synchronisation Pattern \patternGreatM, \cite{petersNestmannGoltz13}]
	\label{def:synchronisationPatternGreatM}
	Let $ \ProcCal{\proc}{\step} $ be a process calculus and $ \PS \in \proc $ such that:
	\begin{compactenum}
		\item $ \PS $ can perform at least five alternative reduction steps $ i : \PS \step P_i $ for $ i \in \Set{ a, b, c, d, e } $ such that the $ P_i $ are pairwise different.
		\item The steps $ a $, $ b $, $ c $, $ d $, and $ e $ form a circle such that $ a $ is in conflict with $ b $, $ b $ is in conflict with $ c $, $ c $ is in conflict with $ d $, $ d $ is in conflict with $ e $, and $ e $ is in conflict with $ a $. Finally,
		\item every pair of steps in $ \Set{ a, b, c, d, e } $ that is not in conflict due to the previous condition is parallel in $ \PS $.
	\end{compactenum}
	In this case, we denote the process $ \PS $ as \patternGreatM. The synchronisation pattern \patternGreatM is visualised by the Petri net in Figure~\ref{fig:greatMInPetri}~(c). If all pairs of parallel steps in $ \Set{ a, b, c, d, e } $ are distributable in $ \PS $, then we call the \patternGreatM \emph{non-local}. Otherwise, it is called \emph{local}.
\end{definition}

\noindent
Note that we need at least four steps in this cycle, to have two steps that are distributable, and a cycle of odd degree to distinguish different variants of the \piCal. Accordingly, the \patternGreatM is the smallest structure with these requirements.
To see the connection with the dining philosophers problem, consider the places in Figure~\ref{fig:greatMInPetri}~(c) as the chopsticks of the philosophers, \ie as resources, and the transitions as eating operations, \ie as steps consuming resources. Each step needs mutually exclusive access to two resources and each resource is shared among two subprocesses. If both resources are allocated simultaneously, eventually exactly two steps are performed.

\cite{petersNestmannGoltz13} then shows that the asynchronous \piCal (\piAsyn) and also the \piCal with separate choice (\piSep) do not contain the pattern \patternGreatM, whereas the standard \piCal (\piMix) with mixed choice has
\patternGreatM.

\begin{example}[Non-Local \patternGreatM in \piMix]
	\label{exa:patternGreatMInPiMix}
	Consider a term $ \piS{\piSE{1}, \ldots, \piSE{5}} \in \procMix $ such that
	\begin{align*}
		\piS{\piSE{1}, \ldots, \piSE{5}} = \Out{a} + b.\piSE{1} \mid \Out{b} + c.\piSE{2} \mid \Out{c} + d.\piSE{3} \mid \Out{d} + e.\piSE{4} \mid \Out{e} + a.\piSE{5}
	\end{align*}
	for some $ \piSE{1}, \ldots, \piSE{5} \in \Set{ \nullTerm, \success } $. Then, $ \piS{\piSE{1}, \ldots, \piSE{5}} $ can perform the steps $ a $, \ldots, $ e $, where the step $ i \in \Set{ a, \ldots, e } $ is a communication on channel $ i $.
	By Definition~\ref{def:synchronisationPatternGreatM}, $ \piS{\piSE{1}, \ldots, \piSE{5}} $ is a non-local \patternGreatM.
\end{example}

Actually, the above term $ \piS{\piSE{1}, \ldots, \piSE{5}} $ is a \patternGreatM in CCS with mixed choice, because for this counterexample the communication of values was not relevant. Adding (unused) values to the communication prefixes is straight forward.
By using the \patternGreatM-pattern $ \piS{\piSE{1}, \ldots, \piSE{5}} $ as counterexample, \cite{petersNestmannGoltz13, petersNestmannGoltz13TR} shows that there is no good and distributability-preserving encoding from \piMix into \piSep (or \piAsyn).

\section{Mobile Ambients are not Distributable}
\label{sec:MinMA}

Similar to the \joinCal, mobile ambients were designed in order to be distributed (or distributable), where ambients were introduced as an explicit representation of locations. But in opposite to the \joinCal there are non-local \patternM in mobile ambients, \ie some form of synchronisation between ambients.
\begin{example}[Non-Local \patternM in Mobile Ambients.]
	\label{exa:nonLocalMMA}
	Consider the $ \procMA $-term
	\begin{align*}
		& \maM = \left( \maOpen{n_1} \mid \maLoc{n_1}{P_1} \right) \mid \left( \maLoc{n_1}{\maIn{n_2}.P_2} \mid \maLoc{n_2}{P_3} \right)
	\end{align*}
	with $ P_1, P_2, P_3 \in \procMA $. $ \maM $ can perform modulo structural congruence the steps
	\begin{itemize}
		\item $ a $: $ \maM \step P_1 \mid \left( \maLoc{n_1}{\maIn{n_2}.P_2} \mid \maLoc{n_2}{P_3} \right) $
		\item $ b $: $ \maM \step \maLoc{n_1}{P_1} \mid \maIn{n_2}.P_2 \mid \maLoc{n_2}{P_3} $
		\item $ c $: $ \maM \step \left( \maOpen{n_1} \mid \maLoc{n_1}{P_1} \right) \mid \left( \maLoc{n_2}{\maLoc{n_1}{P_2} \mid P_3} \right) $
	\end{itemize}
	Here, the steps $ a $ and $ b $ compete for the non-recurrent $ \maOpenAct $-capability. The steps $ b $ and $ c $ compete for the right ambient $ n_1 $ that is non-recurrent in both steps. Hence, both of these pairs of steps are in conflict, while the pair of steps $ a $ and $ c $ is distributable. Thus $ \maM $ is a non-local \patternM.
\end{example}

Similar to the proof, that there is no good and distributability-preserving encoding from \piAsyn into \join, we use this $ \maM $ as a counterexample to show that there is no good and distributability-preserving encoding from \MA into \join.
Therefore, we instantiate the processes $ P_1 $, $ P_2 $, and $ P_3 $ such that the conflicting step $ b $ can be distinguished by success from the distributable steps $ a $ and $ c $.
We choose $ P_1 = \maLoc{n_3}{} $, $ P_2 = \maIn{n_3}.\success $, and $ P_3 = \maOpen{n_1} $, such that $ \maM $ reaches success iff the steps $ a $ and $ c $ are performed.
\begin{example}[Counterexample]
	The non-local \patternM
	\begin{align*}
		\maMS = \left( \maOpen{n_1} \mid \maLoc{n_1}{\maLoc{n_3}{}} \right) \mid \left( \maLoc{n_1}{\maIn{n_2}.\maIn{n_3}.\success} \mid \maLoc{n_2}{\maOpen{n_1}} \right)
	\end{align*}
	reaches success iff $ \maMS $ performs both of the distributable steps $ a $ and $ c $, where
	\begin{itemize}
		\item $ a $: $ \maMS \step S_a $ with $ S_a = \maLoc{n_3}{} \mid \left( \maLoc{n_1}{\maIn{n_2}.\maIn{n_3}.\success} \mid \maLoc{n_2}{\maOpen{n_1}} \right) $ and $ S_a\mustReachSuccessFinite $,
		\item $ b $: $ \maMS \step S_b $ with $ S_b = \maLoc{n_1}{\maLoc{n_3}{}} \mid \maIn{n_2}.\maIn{n_3}.\success \mid \maLoc{n_2}{\maOpen{n_1}} $ and $ S_b\notReachSuccess $, and
		\item $ c $: $ \maMS \step S_c $ with $ S_c = \left( \maOpen{n_1} \mid \maLoc{n_1}{\maLoc{n_3}{}} \right) \mid \maLoc{n_2}{\maLoc{n_1}{\maIn{n_3}.\success} \mid \maOpen{n_1}} $ and $ S_c\mustReachSuccessFinite $.
	\end{itemize}
\end{example}

Any good encoding that preserves distributability has to translate $ \maMS $ such that the emulations of the steps $ a $ and $ c $ are again distributable. However, the encoding can translate these two steps into sequences of steps, which allows to emulate the conflicts with the emulation of $ b $ by two different distributable steps. We show that every distributability-preserving encoding has to distribute $ b $ and, afterwards, that this distribution of $ b $ violates the criteria of a good encoding.

\begin{lemma}
	\label{lem:splitUpB}
	Every encoding $ \arbitraryEncoding : \procMA \to \procJoin $ that is good and distributability-preserving has to split up the conflict in $ \maMS $ of $ b $ with $ a $ and $ c $ such that there exists a maximal execution in $ \ArbitraryEncoding{\maMS} $ in which $ a $ is emulated but not $ c $, and vice versa.
\end{lemma}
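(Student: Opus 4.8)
The plan is to pin down the two conflicts that the emulation of $b$ is forced to create with the emulations of $a$ and $c$, and then to show that these two conflicts cannot be concentrated in a single target step: the two resulting \emph{distributable} decision steps are exactly what lets one commit against $c$ while keeping $a$ alive (and symmetrically), which is the asserted maximal execution. First I would record the convergence bookkeeping: $\maMS$ is finite and replication-free, hence $\maMS$ and all its derivatives are convergent, so by divergence reflection $\ArbitraryEncoding{\maMS}$ is convergent too and Lemmas~\ref{lem:mustSuccessRespecting} and~\ref{lem:mustSuccessSensitiveness} apply throughout. Since $a$ and $c$ are distributable steps of $\maMS$, Lemma~\ref{lem:distributabilityPreservation} gives distributable emulations $E_a$ and $E_c$ in $\ArbitraryEncoding{\maMS}$, leading to states $T_a \asymp \ArbitraryEncoding{S_a}$ and $T_c \asymp \ArbitraryEncoding{S_c}$; from $S_a\mustReachSuccessFinite$ and $S_c\mustReachSuccessFinite$ I obtain $T_a\mustReachSuccessFinite$ and $T_c\mustReachSuccessFinite$. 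Operational completeness together with $S_b\notReachSuccess$ yields an emulation $E_b$ of $b$ reaching some $T_b \asymp \ArbitraryEncoding{S_b}$ with $T_b\notReachSuccess$.

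Next I would show that $E_b$ is in conflict with both $E_a$ and $E_c$. Suppose, toward a contradiction, that $E_b$ were parallel to $E_a$. Then from $T_a$ one could still carry out $E_b$ and reach a successor $\asymp \ArbitraryEncoding{S_b}$ that cannot reach success; extending this to a maximal execution---finite by convergence---contradicts $T_a\mustReachSuccessFinite$. Hence $E_b$ conflicts with $E_a$, and by the same argument with $E_c$.

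The crux is to rule out that a single target step carries both conflicts. Assume one step $s$ of the $b$-emulation reduces a non-recurrent capability that a step $s_a$ of $E_a$ and a step $s_c$ of $E_c$ both need. I would take $s$ to be the \emph{earliest} such step, perform the part of $E_b$ strictly before $s$, and---exploiting that $E_a$ and $E_c$ are distributable and so do not interfere with this prefix---advance $E_a$ and $E_c$ to just before $s_a$ and $s_c$, reaching a single target state $T$ at which $s$, $s_a$, and $s_c$ are all enabled, $s$ is in conflict with each of $s_a$ and $s_c$, and $s_a, s_c$ are distributable (inherited from the distributability of $E_a$ and $E_c$). By Definition~\ref{def:synchronisationPatternM} this $T$ is a non-local \patternM in \join, contradicting the result of \cite{petersNestmannGoltz13} that every \patternM in \join is local. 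Therefore the two conflicts reside in two \emph{distinct} target steps $s_{ab}$ (against $a$) and $s_{cb}$ (against $c$); moreover neither of them may conflict with \emph{both} emulations, since that would again produce the forbidden non-local \patternM.

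Finally I would extract the split execution. Because $s_{cb}$ conflicts with $E_c$ but not with $E_a$, I may fire $s_{cb}$---reordering it forward using the distributability that leaves $E_a$ untouched---thereby disabling the emulation of $c$ while keeping $E_a$ enabled; completing $E_a$ and extending to a maximal execution yields a run of $\ArbitraryEncoding{\maMS}$ in which $a$ is emulated but $c$ is not, and swapping the roles of $a$ and $c$ gives the symmetric run. I expect the main obstacle to be the ``cannot be atomic'' step: one must exhibit a \emph{single} reachable target state at which all three steps are simultaneously enabled with exactly the conflict/parallel relationship required by Definition~\ref{def:synchronisationPatternM}. This forces a careful interleaving argument---executing common prefixes of the three emulations while verifying that distributability, and hence mutual non-interference of the prefixes, is preserved along the way---so that the extracted configuration is a genuine non-local \patternM rather than an artefact of a bad interleaving.
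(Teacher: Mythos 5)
Your proposal follows essentially the same route as the paper, which itself only sketches the idea and defers the details to the proof of Lemma~4 in \cite{petersNestmannGoltz13TR}: obtain distributable emulations of $a$ and $c$ via Lemma~\ref{lem:distributabilityPreservation}, force the conflicts of $b$ with $a$ and $c$ onto the emulations by a success/must-reach-success argument, rule out a single conflicting target step because it would yield a non-local \patternM in \join, and conclude that the split conflict permits a maximal execution emulating $a$ but not $c$ (and vice versa). One small imprecision: after emulating $a$ and then $b$ the reached state need not be $\asymp \ArbitraryEncoding{S_b}$ (it is reachable from $T_b \asymp \ArbitraryEncoding{S_b}$ by the diamond property, whence it cannot reach success), but this does not affect the argument.
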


In \cite{petersNestmannGoltz13} we show a similar result for all encodings from \piAsyn into \join (Lemma~4 in \cite{petersNestmannGoltz13}) using a counterexample E1.
Since the counterexample $ \maMS $ in \MA is in its properties very similar to the counterexample E1 of \cite{petersNestmannGoltz13}, the proof of Lemma~\ref{lem:splitUpB} is exactly the same as the proof of Lemma~4 in \cite{petersNestmannGoltz13} as presented in \cite{petersNestmannGoltz13TR}.
The main idea of this proof is as follows:
Any good encoding that preserves distributability has to translate $ \maMS $ such that the emulations of the steps $ a $ and $ c $ are again distributable. Moreover any good encoding has to translate the conflicts between $ a $ and $ b $ as well as between $ b $ and $ c $ into conflicts between the respective emulations. This either leads to a non-local \patternM again or it results into an emulation of $ b $ with at least two steps such that the conflicts with the emulation of $ b $ are emulated by two different steps. Next we show that this distribution of the conflict violates the criteria of a good encoding with respect to the considered source language, \ie \wrt our counterexample $ \maMS $ and an adaptation of this example.

Also the proof that there is no good and distributability-preserving encoding from \MA into \join is very similar to the proof for the non-existence of such an encoding from \piAsyn into \join in \cite{petersNestmannGoltz13, petersNestmannGoltz13TR}.

\begin{theorem}
	\label{thm:noGoodEncodingMA}
	There is no good and dis\-tri\-bu\-ta\-bi\-li\-ty-pre\-ser\-ving encoding from \MA into \join.
\end{theorem}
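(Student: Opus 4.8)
The plan is to argue by contradiction, reusing the counterexample $\maMS$ exactly as in the separation of \piAsyn from \join in \cite{petersNestmannGoltz13, petersNestmannGoltz13TR}. So assume, towards a contradiction, that $\arbitraryEncoding : \procMA \to \procJoin$ is good and preserves distributability. First I would record that $\maMS$ and all three derivatives $S_a$, $S_b$, $S_c$ are convergent, since each admits only finitely many reductions; by divergence reflection (Definition~\ref{def:divergenceReflection}) their encodings are convergent as well, which is what licenses the use of Lemmas~\ref{lem:mustSuccessSensitiveness} and~\ref{lem:mustSuccessRespecting} below. I would also isolate the two facts that drive the whole argument: $\maMS$ reaches success \emph{iff} both distributable steps $a$ and $c$ are performed, and, consequently, $S_c\mustReachSuccessFinite$ while every successful derivative of $S_c$ is obtained only after performing (the residual of) the step $a$ (the $\maOpenAct$ that exposes the ambient guarding $\success$).

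The heart of the proof is Lemma~\ref{lem:splitUpB}. It hands us a maximal execution of $\ArbitraryEncoding{\maMS}$ that emulates $c$ but not $a$ (and, symmetrically, one emulating $a$ but not $c$). By convergence this execution is finite and ends in a term $T$ that cannot reduce. I would then apply operational soundness (Definition~\ref{def:operationalCorrespondence}): since $c$ is the only source step fully emulated on the way to $T$, the reached source state is $S_c$, so $T \stepsTarget \asymp \ArbitraryEncoding{S_c}$; and because $T$ is a deadlock this collapses to $T \asymp \ArbitraryEncoding{S_c}$.

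Now I would derive two incompatible statements about $T$. On the one hand, $S_c\reachSuccess$ together with success sensitiveness (Definition~\ref{def:successSensitiveness}) gives $\ArbitraryEncoding{S_c}\reachSuccess$; sharpening this through $S_c\mustReachSuccessFinite$ and Lemma~\ref{lem:mustSuccessSensitiveness} yields $\ArbitraryEncoding{S_c}\mustReachSuccessFinite$, whence Lemma~\ref{lem:mustSuccessRespecting} applied to $T \asymp \ArbitraryEncoding{S_c}$ forces $T\mustReachSuccessFinite$. As $T$ is a deadlock, its only finite maximal execution is $T$ itself, so $T$ must already be successful. On the other hand, I claim $T\notReachSuccess$: reaching success from $\maMS$ requires the $a$-step, so a successful target term must have emulated $a$; but the chosen execution never emulates $a$ and $T$ cannot reduce, hence success is permanently out of reach and $T$ is not successful. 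These two conclusions contradict each other (equivalently, $T\notReachSuccess$ and $\ArbitraryEncoding{S_c}\reachSuccess$ contradict that $\asymp$ is success respecting, Definition~\ref{def:asympSuccessRespecting}). The symmetric execution (emulating $a$ but not $c$) yields the same contradiction via $S_a$, so no such encoding can exist.

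The step I expect to be the main obstacle is the second claim, that a successful target term must have emulated the missing step $a$. This is exactly where the split of $b$ bites: since $a$ and $c$ are distributable and $b$ conflicts with both, the emulation of $b$ is necessarily carried by two independent fragments, and I must rule out that the fragment committing against $a$ lets the encoding manufacture success through junk without ever emulating $a$ --- all this while $\asymp$ is only a weak reduction bisimulation. Making this transfer of the success structure of $\maMS$ precise is the technical core; as remarked after Lemma~\ref{lem:splitUpB}, $\maMS$ is designed to be behaviourally analogous to the counterexample E1 of \cite{petersNestmannGoltz13}, so this argument can be imported almost verbatim from the proof given in \cite{petersNestmannGoltz13TR}.
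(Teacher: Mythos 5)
There is a genuine gap, and it sits exactly where you flagged it---but it cannot be closed the way you suggest. Your argument establishes one horn of the contradiction soundly (the deadlocked endpoint $T$ of the chosen maximal execution must be successful, because the corresponding source derivative must reach success in every finite maximal execution), but the other horn---``a successful target term must have emulated the missing step, hence $T\notReachSuccess$''---is precisely the statement that a good encoding is \emph{not} required to satisfy. Success sensitiveness only relates literal translations of source terms to their sources; it says nothing about whether an intermediate target state can ``manufacture'' success without having emulated a particular source step, and since $\maMS\reachSuccess$ anyway, nothing forbids $\ArbitraryEncoding{\maMS}$ from reaching success along this very execution. Asserting otherwise begs the question, and it cannot be ``imported almost verbatim'' from \cite{petersNestmannGoltz13TR}, because the proof there (and in this paper) never proves that claim at all. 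A secondary weakness of the same kind: operational soundness only yields \emph{some} $S'$ with $\maMS\steps S'$ and $T\asymp\ArbitraryEncoding{S'}$; identifying $S'$ as $S_c$ from ``$c$ is the only step fully emulated'' needs an argument you do not give.

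The paper's proof resolves this differently, and the decisive ingredient---absent from your proposal---is compositionality. It accepts that the execution emulating $a$ but not $c$ reaches success (forced by $S_a\mustReachSuccessFinite$), writes $\ArbitraryEncoding{\maMS}=\Context{}{}{\ArbitraryEncoding{S_1},\ArbitraryEncoding{S_2}}$ with $S_1=\maOpen{n_1}\mid\maLoc{n_1}{\maLoc{n_3}{}}$ and $S_2=\maLoc{n_1}{\maIn{n_2}.\maIn{n_3}.\success}\mid\maLoc{n_2}{\maOpen{n_1}}$, and then replaces $S_2$ by $S_2'=\maLoc{n_1}{\maOut{n_2}.\maIn{n_3}.\success}\mid\maLoc{n_2}{\maOpen{n_1}}$. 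Since $\FreeNames{S_2}=\FreeNames{S_2'}$, the same context must be used, so $\ArbitraryEncoding{S_1\mid S_2'}=\Context{}{}{\ArbitraryEncoding{S_1},\ArbitraryEncoding{S_2'}}$ can replay the success-reaching execution that never emulates $c$; but $\left(S_1\mid S_2'\right)\notReachSuccess$, contradicting success sensitiveness applied to this \emph{modified} source term. So the contradiction is obtained by transferring the offending execution to a term whose source cannot succeed, not by showing the original endpoint is unsuccessful. To repair your proof you would need to restructure it around this compositionality argument and the choice of $S_2'$.
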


\begin{proof}
	Assume the opposite.
	Then there is a good and distributability-preserving encoding of $ \maMS $.
	By the proof of Lemma~\ref{lem:splitUpB}, there is a maximal execution of $ \ArbitraryEncoding{\maMS} $ in that $ a $ but not $ c $ is emulated or vice versa. Since $ S_a\mustReachSuccessFinite $ and $ S_c\mustReachSuccessFinite $ and because of success sensitiveness, the corresponding emulation leads to success. So there is an execution such that the emulation of $ a $ leads to success without the emulation of $ c $ or vice versa. Let us assume that $ a $ but not $ c $ is emulated. The other case is similar.

	For encodings with respect to the relaxed definition of compositionality in Definition~\ref{def:goodEncoding}, there exists a context $ \Context{\hole_1, \hole_2} : \procJoin^2 \to \procJoin $---the combination of the surrounding context and the context introduced by compositionality (Definition~\ref{def:compositionality})---such that $ \ArbitraryEncoding{\maMS} = \Context{}{}{\ArbitraryEncoding{S_1}, \ArbitraryEncoding{S_2}} $, where $ S_1, S_2 \in \procMA $ with $ S_1 = \maOpen{n_1} \mid \maLoc{n_1}{\maLoc{n_3}{}} $ and $ S_2 = \maLoc{n_1}{\maIn{n_2}.\maIn{n_3}.\success} \mid \maLoc{n_2}{\maOpen{n_1}} $.
	Let $ S_2' = \maLoc{n_1}{\maOut{n_2}.\maIn{n_3}.\success} \mid \maLoc{n_2}{\maOpen{n_1}} $.
	Since $ \FreeNames{S_2} = \FreeNames{S_2'} $, also $ S_1 \mid S_2' $ has to be translated by the same context, \ie $ \ArbitraryEncoding{S_1 \mid S_2'} = \Context{}{}{\ArbitraryEncoding{S_1}, \ArbitraryEncoding{S_2'}} $.
	$ \maMS $ and $ S_1 \mid S_2' $ differ only by a capability necessary for step $ c $, but step $ a $ and $ b $ are still possible.
	We conclude, that if $ \Context{}{}{\ArbitraryEncoding{S_1}, \ArbitraryEncoding{S_2}} $ reaches some $ T_a\mustReachSuccessFinite $ without the emulation of $ c $, then $ \Context{}{}{\ArbitraryEncoding{S_1}, \ArbitraryEncoding{S_2'}} $ reaches at least some state $ T_a' $ such that $ T_a'\reachSuccess $. Hence, $ \ArbitraryEncoding{S_1 \mid S_2'}\reachSuccess $ but $ \left( S_1 \mid S_2' \right)\notReachSuccess $ which contradicts success sensitiveness.
\end{proof}

\noindent
Note that the only differences in the proof above and the proof for the the non-existence of a good and distributability-preserving encoding from \piAsyn into \join in \cite{petersNestmannGoltz13TR} are the due to the different counterexample and the corresponding choice of its adaptation with $ S_2' $.

\section{Conflicts in Mobile Ambients}
\label{sec:conflictsInMA}

Both of the above-defined synchronisation patterns rely on the notion of conflict.
In mobile ambients, the same ambient can be considered as recurrent in one step, but non-recurrent in another step. This fact, \ie the existence of operators that are recurrent for some but non-recurrent for other steps, distinguishes mobile ambients from all other calculi considered in \cite{petersNestmannGoltz13} and generates a new notion of conflict.
\begin{example}[Asymmetric Conflict]
	Consider the mobile ambient term:
	\begin{align*}
		P = \maLoc{n_1}{\maIn{n_2}} \mid \maLoc{n_2}{\maIn{n_3}} \mid \maLoc{n_3}{}
	\end{align*}
	$ P $ can perform two alternative steps
	\begin{itemize}
		\item $ s_1: P \step P_1 $ with $ P_1 = \maLoc{n_2}{\maLoc{n_1}{} \mid \maIn{n_3}} \mid \maLoc{n_3}{} $ and
		\item $ s_2: P \step P_2 $ with $ P_2 = \maLoc{n_1}{\maIn{n_2}} \mid \maLoc{n_3}{\maLoc{n_2}{}} $
	\end{itemize}
	that both use the ambient $ n_2 $ (but no other operator is used in both steps). In $ s_1 $, the ambient $ n_2 $ is a recurrent capability but in $ s_2 $ the ambient $ n_2 $ is moved and, thus, is non-recurrent. Accordingly, $ s_2 $ disables $ s_1 $, \ie $ P_2 \noStep $, but not vice versa, \ie $ P_1 $ can perform the step $ s_2 $ such that $ P_1 \step \maLoc{n_3}{\maLoc{n_2}{\maLoc{n_1}{}}} $.
\end{example}
Accordingly, we denote a conflict as \emph{symmetric} if the steps compete for an operator that is non-recurrent in both, \ie if both steps disable the respective other step, and otherwise as \emph{asymmetric}.
The example above can be extended to a cyclic structure of odd degree. The term
\begin{align*}
	\maLoc{a}{\maIn{b}} \mid \maLoc{b}{\maIn{c}} \mid \maLoc{c}{\maIn{d}} \mid \maLoc{d}{\maIn{e}} \mid \maLoc{e}{\maIn{a}}
\end{align*}
even satisfies Definition~\ref{def:synchronisationPatternGreatM}, \ie it describes a non-local \patternGreatM, if we were to relax in the required conflicts in Definition~\ref{def:synchronisationPatternGreatM} by requiring only asymmetric conflicts. However, because of the asymmetric conflicts within this structure, it can be encoded much more easily than a \patternGreatM with symmetric conflicts.
This is also reflected by the fact that in the proofs for the separation result between \piMix and \piAsyn in \cite{petersNestmannGoltz13} we have to rely on the mutually exclusive nature of the conflicts in the \patternGreatM of the counterexample $ \piS{\piSE{1}, \ldots, \piSE{5}} $. Accordingly, we cannot use an \patternM or a \patternGreatM with asymmetric conflicts to derive separation results as done above. Instead, we show that, despite of the \patternGreatM with asymmetric conflicts, mobile ambients can be separated from \piMix by the synchronisation pattern \patternGreatM, because they cannot express a \patternGreatM with symmetric conflicts.

It turns out that the symmetric conflict in the pattern \patternM of the step $ b $ with $ a $ and $ c $ as given in Example~\ref{exa:nonLocalMMA} can only be expressed with an $ \maOpenAct $-action.

\begin{lemma}
	\label{lem:onesidedConflictM}
	Let $ P \in \procMA $ be an \patternM.
	Then one of the two conflicts is asymmetric or the step $ b $ reduces an $ \maOpenAct $-action.
\end{lemma}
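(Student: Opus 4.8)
The plan is to prove the contrapositive: assuming that both conflicts (between $a$ and $b$, and between $b$ and $c$) are symmetric and that $b$ does \emph{not} reduce an $\maOpenAct$-action, I would derive a contradiction with the requirement of Definition~\ref{def:synchronisationPatternM} that $a$ and $c$ are parallel. Since $b$ is not an open step, its reduction is either an $\maInAct$- or an $\maOutAct$-step. In both cases the shape of the corresponding reduction rule shows that $b$ has \emph{exactly} two non-recurrent items: the movement capability it consumes, say $\kappa_b$ (an $\maInAct$- resp.\ $\maOutAct$-capability), and the single ambient $n$ that it relocates; the target/parent ambient is only recurrent in $b$, since it is neither moved nor opened. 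Because a symmetric conflict by definition competes for an operator that is non-recurrent in \emph{both} participating steps, the operator shared by $a$ and $b$ must be non-recurrent in $b$, hence it is $\kappa_b$ or the ambient $n$; the same holds for the operator shared by $b$ and $c$.

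The key step is the observation that consuming $\kappa_b$ forces a step to relocate exactly the ambient $n$. The movement capability $\kappa_b$ sits at the top level of the content of $n$, and an $\maInAct$- resp.\ $\maOutAct$-capability can be consumed only by the corresponding movement rule, which moves precisely the ambient whose top level carries that capability. Since capabilities are labelled, $\kappa_b$ identifies a unique ambient occurrence, namely $n$. Consequently, if $a$ is in symmetric conflict with $b$ through $\kappa_b$, then $a$ is itself a movement step relocating $n$, so $n$ is among the non-recurrent items of $a$; and if $a$ is in symmetric conflict with $b$ through the ambient $n$, then $a$ moves or opens $n$, so again $n$ is non-recurrent in $a$. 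Either way $a$ acts non-recurrently on the very ambient $n$, and the identical argument shows that $c$ also acts non-recurrently on $n$.

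But then $a$ and $c$ reduce the same non-recurrent capability, namely the ambient $n$, so they are in conflict, contradicting that $a$ and $c$ are parallel. This rules out $b$ being an $\maInAct$-step; the argument for an $\maOutAct$-step is verbatim the same, with the enclosing ambient playing the role of the recurrent operator. Hence $b$ must reduce an $\maOpenAct$-action, which establishes the lemma.

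I expect the main obstacle to be the bookkeeping in the coupling step, \ie rigorously arguing that sharing the guard $\kappa_b$ collapses onto acting on the same ambient $n$, including the reliance on the labelling of capabilities to pin down ambient identity. The point that makes the lemma true, and that must be handled with care, is the structural asymmetry between movement and opening: for $\maInAct$/$\maOutAct$ the consumed guard lies \emph{inside} the moved ambient, so sharing the guard determines the ambient; whereas an $\maOpenAct$-capability is a \emph{sibling} of the ambient it dissolves, and the same $\maOpenAct$-capability may pair with different equally named sibling ambients. Thus when $b$ opens an ambient, $a$ and $c$ can share $b$'s open capability and $b$'s opened ambient respectively while acting on two \emph{distinct} ambients, so no conflict between $a$ and $c$ is forced, which is exactly why Example~\ref{exa:nonLocalMMA} realises a symmetric \patternM precisely with an $\maOpenAct$-action at $b$.
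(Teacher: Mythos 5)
Your proof is correct and follows what appears to be the intended argument: the paper defers the actual proof of this lemma to its technical report, but your case analysis---showing that for $\maInAct$/$\maOutAct$-steps the consumed capability and the moved ambient determine one another, so two symmetric conflicts with $b$ would force both $a$ and $c$ to treat that same ambient as non-recurrent, contradicting that $a$ and $c$ are parallel---is exactly the mechanism the surrounding discussion points to. Your closing observation that $\maOpenAct$ escapes this because the capability is a sibling of (rather than contained in) the ambient it dissolves, and may therefore pair with a different equally named ambient, correctly identifies why Example~\ref{exa:nonLocalMMA} realises the pattern.
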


Since the synchronisation pattern \patternGreatM consists of several cyclic overlapping \patternM, all five steps of a \patternGreatM in mobile ambients have to reduce an $ \maOpenAct $-capability or at least one of the conflicts is asymmetric.
However, five steps on $ \maOpenAct $-capabilities cannot be combined in a cycle of odd degree. Thus, in all \patternGreatM-like structures there is at least one asymmetric conflict.
But there are no \patternGreatM (without asymmetric conflicts) in mobile ambients.

\begin{lemma}
	\label{lem:onesidedConflictGreatM}
	For all \patternGreatM-like structures $ P \in \procMA $ one of the conflicts in $ P $ that exist according to Definition~\ref{def:synchronisationPatternGreatM} is asymmetric.
\end{lemma}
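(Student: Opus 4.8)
The plan is to argue by contradiction: I would assume that all five conflicts of the \patternGreatM-like structure $P$ are symmetric and derive a contradiction, which forces at least one of them to be asymmetric. Write the cyclic conflicts as $a$--$b$--$c$--$d$--$e$--$a$. The first step is to notice that each of the five steps occupies the centre of an \patternM formed together with its two cycle neighbours. For instance, the two conflict partners of $b$ are $a$ and $c$; since $a$ and $c$ are non-adjacent in the cycle, they are parallel by the third clause of Definition~\ref{def:synchronisationPatternGreatM}, and (using that the $P_i$ are pairwise different) the triple $(a,b,c)$ satisfies the three conditions of Definition~\ref{def:synchronisationPatternM} with $b$ as its middle step. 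The same holds cyclically for every step and its two neighbours. I would then apply Lemma~\ref{lem:onesidedConflictM} to each of these five \patternM: it states that either one of the two conflicts at the middle step is asymmetric, or the middle step reduces an $\maOpenAct$-action. Under the contradiction hypothesis no conflict is asymmetric, so every middle step---and hence every one of the five steps---must reduce an $\maOpenAct$-action.

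The heart of the argument is then purely combinatorial. Each $\maOpenAct$-step reduces exactly one non-recurrent $\maOpen$-capability and dissolves exactly one opened ambient, which is non-recurrent for that step; by the reduction axiom $\maOpen{n}.P \mid \maLoc{n}{Q} \step P \mid Q$ these are its only non-recurrent operators. Two $\maOpenAct$-steps are therefore in symmetric conflict precisely when they share one of these two operators, so I would label every edge of the conflict cycle by $O$ if the incident steps share their $\maOpen$-capability and by $A$ if they share the opened ambient. The decisive observation is that two edges meeting at a common step cannot carry the same label: were the edges $a$--$b$ and $b$--$c$ both labelled $O$, then, because $b$ has only a single $\maOpen$-capability, the steps $a$ and $c$ would both consume that same capability and hence be in conflict---contradicting that $a$ and $c$ are parallel. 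The symmetric argument with the shared ambient rules out two adjacent $A$-labels. Thus the labels $O$ and $A$ would have to alternate strictly around the cycle, which is impossible for a cycle of odd length five; this contradiction finishes the proof.

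I expect the main obstacle to lie in the justification underpinning the edge labelling, namely the precise claim that a symmetric conflict between two $\maOpenAct$-steps is \emph{necessarily} caused by a shared $\maOpen$-capability or a shared opened ambient, and that each step owns exactly one of each. I would derive this directly from the $\maOpenAct$-axiom above, observing that its only non-recurrent operators are the consumed $\maOpen$-capability and the dissolved ambient, and that two distinct steps sharing \emph{both} would have to coincide (so each conflict edge receives a single well-defined label). The reduction to five $\maOpenAct$-steps and the odd-cycle colouring step are then routine; the only auxiliary point---that non-adjacent steps of the cycle are genuinely parallel---is exactly the third clause of Definition~\ref{def:synchronisationPatternGreatM} and so requires no extra work.
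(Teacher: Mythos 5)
Your proposal follows essentially the same route as the paper: it reduces the statement to Lemma~\ref{lem:onesidedConflictM} applied to each of the five overlapping \patternM{}s (so that, absent asymmetric conflicts, all five steps must reduce an $\maOpenAct$-capability) and then observes that five $\maOpenAct$-steps cannot form a symmetric conflict cycle of odd length. Your explicit two-label alternation argument (shared $\maOpen$-capability versus shared opened ambient, with adjacent edges forced to differ because neighbours of a common step are parallel) is a clean and correct way to make precise the paper's remark that such steps ``cannot be combined in a cycle of odd degree.''
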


A \patternGreatM with an asymmetric conflict cannot be extended to a \patternGreatM that can be used as counterexample similarly to $ \piS{\piSE{1}, \ldots, \piSE{5}} $ in \cite{petersNestmannGoltz13, petersNestmannGoltz13TR}.
The proof to separate \piMix from \piSep and \piAsyn in \cite{petersNestmannGoltz13, petersNestmannGoltz13TR} exploits the fact that every maximal execution of \patternGreatM contains exactly two distributable steps of the five alternative steps that form the \patternGreatM.
But, if we replace a conflict in the \patternGreatM by an asymmetric conflict, then three steps are possible in one execution.

\begin{lemma}
	\label{lem:threeSteps}
	All \patternGreatM-like structures $ P \in \procMA $ have an execution that executes three of the five alternative steps that exist according to Definition~\ref{def:synchronisationPatternGreatM}.
\end{lemma}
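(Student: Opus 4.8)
The plan is to use the single asymmetric conflict guaranteed by Lemma~\ref{lem:onesidedConflictGreatM} to beat the ``at most two steps per execution'' bound that would hold if all five conflicts were symmetric. First I would normalise the situation: since the conflicts $ab$, $bc$, $cd$, $de$, $ea$ of Definition~\ref{def:synchronisationPatternGreatM} are arranged in a cycle, I may relabel the steps along the cycle so that, without loss of generality, the asymmetric conflict is the one between $a$ and $b$. By Definition~\ref{def:distributableSteps}, an \emph{asymmetric} conflict means that $a$ and $b$ compete for an ambient that is non-recurrent in exactly one of them; hence exactly one of the two steps disables the other, while the opposite order keeps the first step enabled. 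So there is an order $(X,Y)$ of the pair $\Set{a,b}$ in which both steps can be performed in sequence.

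Next I would pick the third step. In the five-cycle the only step that is non-adjacent to \emph{both} $a$ and $b$ is $d$: the steps parallel to $a$ are $c$ and $d$, those parallel to $b$ are $d$ and $e$, so $d$ is parallel to both. I would then build the execution that first performs $d$ and afterwards performs $X$ followed by $Y$. Since $d$ is parallel to $a$ and to $b$ (Definition~\ref{def:distributableSteps}), performing $d$ disables neither $a$ nor $b$, so both remain enabled; and because $d$ does not touch the ambient generating the $a$--$b$ conflict, the asymmetry survives, so after $d$ the order $(X,Y)$ still succeeds. This execution of $P$ performs the three pairwise different steps $d$, $a$, and $b$, which is what the lemma asserts.

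The step I expect to be the main obstacle is the commutation claim, \ie a diamond property for parallel steps in \procMA stating that performing $d$ preserves both the enabledness of $a$ and $b$ and the (asymmetric) shape of their conflict. In mobile ambients this is not automatic, because the same ambient may be recurrent for one step and non-recurrent for another, so being parallel in the sense of Definition~\ref{def:distributableSteps} does not by itself yield commuting reductions. To discharge it I would argue at the level of the labelled capabilities: the non-recurrent capabilities consumed by $d$ are disjoint from those consumed by $a$ and by $b$, and in particular $d$ does not consume the shared ambient of the $a$--$b$ conflict; hence the reduction realising $d$ leaves the redexes of $a$ and $b$---and their competition for that ambient---intact. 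With this commutation fact, the combinatorial argument above produces the desired execution using three of the five steps.
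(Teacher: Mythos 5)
Your proposal takes essentially the same route as the paper's proof: invoke Lemma~\ref{lem:onesidedConflictGreatM} to obtain an asymmetric conflict, execute its two steps in the order where the first does not disable the second, and add the unique step of the five-cycle that is parallel to both of them (your $d$). The paper's argument is exactly this three-step construction, stated as a brief sketch; your additional attention to the commutation of the parallel step with the conflicting pair is a reasonable refinement of a point the paper leaves implicit via its step-semantics reading of ``parallel''.
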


\begin{proof}
	By Lemma~\ref{lem:onesidedConflictGreatM}, all \patternGreatM in mobile ambients have an asymmetric conflict. Thus, whenever some $ \piS{\hole_a, \ldots, \hole_e} : \procMA^5 \to \procMA $ is such that for all $ \piSE{1}, \ldots, \piSE{5} \in \Set{ \maNull, \success } $ the term $ \piS{\piSE{1}, \ldots, \piSE{5}} $ is a \patternGreatM except for asymmetric conflicts, then there is a maximal execution of $ \piS{\piSE{1}, \ldots, \piSE{5}} $ that contains three steps of the set $ \Set{ a, \ldots, e } $: the two steps that are related by the asymmetric conflict (executing first the step that is not in conflict to the other and then the one-sided conflicting step) and the step that is in parallel to both of the former neighbouring steps.
\end{proof}

To show that there is no good and distributability-preserving encoding from \piMix into \MA we proceed as in \cite{petersNestmannGoltz13, petersNestmannGoltz13TR}.
First, we observe that every conflict in our counterexample $ \piS{\piSE{1}, \ldots, \piSE{5}} $ has to be translated into conflicts of the respective emulations in mobile ambients.

\begin{lemma}
	\label{lem:translateConflicts}
	Any good and distributability-preserving encoding $ \arbitraryEncoding : \procMix \to \procMA $ has to translate the conflicts in $ \piS{\piSE{1}, \ldots, \piSE{5}} $ into conflicts of the corresponding emulations.
\end{lemma}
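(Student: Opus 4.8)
The plan is to establish the five conflicts one at a time and, by the cyclic symmetry of $\piS{\piSE{1}, \ldots, \piSE{5}}$ from Example~\ref{exa:patternGreatMInPiMix}, it suffices to treat a single pair of neighbouring steps; I will do so for the pair $a, b$, which compete for the first parallel component $\Out{a} + b.\piSE{1}$. I argue by contradiction and assume that the emulation of $a$ and the emulation of $b$ are \emph{not} in conflict in $\ArbitraryEncoding{\piS{\piSE{1}, \ldots, \piSE{5}}}$, hence parallel in the sense of Definition~\ref{def:distributableSequences}. To observe the two steps I instrument the continuations: I set $\piSE{5} = \success$, the branch freed by step $a$, and all remaining holes to $\nullTerm$, and write $S$ for the resulting $\procMix$-term. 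Since $\FreeNames{\success} = \emptyset$, the instrumentation changes neither the free names nor, by compositionality (Definition~\ref{def:compositionality}), the surrounding context, so the emulation and conflict structure of $S$ agrees with that of the general pattern.

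First I record the two source facts that drive the argument. Because $a$ and $b$ share the first mixed-choice component, performing $b$ discards $\Out{a}$ and thereby disables $a$ \emph{permanently}; as the only $\success$ in $S$ sits behind the input on $a$ in the fifth component, the derivative $S_b$ of $S$ after $b$ satisfies $S_b \notReachSuccess$. Dually, performing $a$ frees an unguarded $\success$ at top level, and since $\success$ is a constant that no reduction consumes, \emph{every} $\procMix$-derivative of the post-$a$ term $S_a$ still has an unguarded $\success$ and thus reaches success.

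Now I exploit the assumed absence of conflict. By operational completeness (Definition~\ref{def:operationalCorrespondence}) there is a $b$-emulation $\ArbitraryEncoding{S} \steps T_b \asymp \ArbitraryEncoding{S_b}$; by success sensitiveness (Definition~\ref{def:successSensitiveness}) $\ArbitraryEncoding{S_b} \notReachSuccess$, and as $\asymp$ is success respecting (Definition~\ref{def:asympSuccessRespecting}) we obtain $T_b \notReachSuccess$. Since the two emulations are parallel, all steps of the $a$-emulation remain enabled at $T_b$, and because non-conflicting reductions commute they lead to the same target state $T^{*}$ that the $a$-emulation followed by the $b$-emulation reaches. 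As $T^{*}$ is a derivative of $T_b$, it inherits $T^{*} \notReachSuccess$. For the converse direction, after the $a$-emulation we are at some $T_a \asymp \ArbitraryEncoding{S_a}$, and performing the residual $b$-steps yields the same $T^{*}$. Using that $\asymp$ is a weak reduction bisimulation (Definition~\ref{def:asympReductionBisim}) I mirror these steps on $\ArbitraryEncoding{S_a}$, reaching some $U \asymp T^{*}$; operational soundness then supplies a source derivative $R$ with $S_a \steps R$ and $U \steps \asymp \ArbitraryEncoding{R}$. By the persistence of $\success$ noted above, $R$ reaches success, so $\ArbitraryEncoding{R} \reachSuccess$, whence $U \reachSuccess$ and $T^{*} \reachSuccess$. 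This contradicts $T^{*} \notReachSuccess$, so the emulations of $a$ and $b$ must be in conflict; running the same argument for each of the five neighbouring pairs, instrumented by the continuation freed by the respective step, proves the lemma.

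I expect the delicate point to be exactly the commutation that makes both orders reach the \emph{same} $T^{*}$: it rests on the confluence of parallel (non-conflicting) executions underlying Definitions~\ref{def:distributableSteps} and~\ref{def:distributableSequences}, and it must be threaded together with operational soundness and the success-respecting weak bisimulation so that the fact that both emulations have been performed is pinned to a source state performing both conflicting steps, which $\procMix$ cannot reach. Note that here I only need the \emph{existence} of a target conflict of whatever kind; the finer question of which conflicts mobile ambients can realise symmetrically is what Lemmas~\ref{lem:onesidedConflictGreatM} and~\ref{lem:threeSteps} exploit afterwards.
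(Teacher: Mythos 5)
Your argument is correct and takes essentially the same route as the paper, which defers to the corresponding lemma for encodings from \piMix into \piSep in the technical report: assume a pair of non-conflicting emulations, commute them to a common derivative, and refute its existence by instrumenting the continuation freed by one of the two conflicting steps with $ \success $ and combining compositionality, operational correspondence, and success sensitiveness with the success-respecting reduction bisimulation. The commutation of parallel executions that you flag as delicate is exactly what the framework's labelled notion of capabilities and its definition of conflict are designed to supply, so no additional argument is needed there.
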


The proof of this Lemma is exactly the same as the proof for the corresponding Lemma for encodings from \piMix into \piSep in \cite{petersNestmannGoltz13TR} but using the lemmas above, because this proof relies on the encodability criteria and the abstract notion of conflicts that is the same for \piSep and \MA.
Note that this proof assumes an encoding that satisfies compositionality as defined in Definition~\ref{def:compositionality}, but, as already stated in \cite{petersNestmannGoltz13}, it also holds in case of the relaxed version of compositionality that is used here.
Then, similar to Lemma~\ref{lem:splitUpB}, we show that each good encoding of the counterexample requires that a conflict has to be distributed.

\begin{lemma}
	\label{lem:splitGreatMAM}
	Any good and distributability-preserving encoding $ \arbitraryEncoding : \procMix \to \procMA $ has to split up at least one of the conflicts in $ \piS{\piSE{1}, \ldots, \piSE{5}} $ (or in $ \piS{\hole_a, \ldots, \hole_e} $) such that there exists a maximal execution of $ \ArbitraryEncoding{\piS{\hole_a, \ldots, \hole_e}} $ that emulates only one source term step, \ie unguards exactly one of the five holes.
\end{lemma}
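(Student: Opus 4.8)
The plan is to mirror the strategy of Lemma~\ref{lem:splitUpB} and of the corresponding \piMix-into-\piSep argument from \cite{petersNestmannGoltz13TR}, now feeding in the two structural facts about conflicts in \procMA established just above instead of ``the target has no \patternGreatM''. First I would fix a good, distributability-preserving $\arbitraryEncoding : \procMix \to \procMA$ and collect the consequences of the criteria on the counterexample $\piS{\hole_a, \ldots, \hole_e}$ of Example~\ref{exa:patternGreatMInPiMix}. By operational completeness each of the five source steps $a, \ldots, e$ is emulated; by Lemma~\ref{lem:distributabilityPreservation} the emulations of source steps that are distributable in $\piS{\hole_a, \ldots, \hole_e}$ (\ie of every non-adjacent pair in the cycle) are distributable in $\ArbitraryEncoding{\piS{\hole_a, \ldots, \hole_e}}$; and by Lemma~\ref{lem:translateConflicts} each of the five source conflicts between $a$ and $b$, $b$ and $c$, $c$ and $d$, $d$ and $e$, and $e$ and $a$ is reflected by a conflict between the two corresponding emulations, \ie is realised by a pair of conflicting \procMA-steps inside those emulations.

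Next I would read off these five realised conflicts as a \patternGreatM-like structure in \procMA: five cyclically conflicting \procMA-steps (the steps deciding each emulation's two conflicts), with the non-adjacent pairs distributable exactly as Definition~\ref{def:synchronisationPatternGreatM} requires. By Lemma~\ref{lem:onesidedConflictGreatM} such a structure cannot keep all five conflicts symmetric---intuitively, only $\maOpenAct$-actions realise a symmetric conflict (Lemma~\ref{lem:onesidedConflictM}) and five such actions cannot be closed into an odd cycle---so at least one of the realised conflicts, say the one stemming from the conflict between $a$ and $b$, is asymmetric. By Lemma~\ref{lem:threeSteps} this asymmetric conflict then yields a \procMA-execution that fires the deciding steps of three of the five emulations: the two steps linked by the asymmetric conflict (scheduled so that the non-disabling step goes first) together with the emulation-step parallel to both.

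I would then turn this three-step execution into a tension with the two-step character of the source by means of operational soundness. Every maximal execution of $\piS{\hole_a, \ldots, \hole_e}$ fires exactly two of the five steps, so no reachable source state admits a third; hence the target state reached by the three deciding steps, which by soundness is related via $\stepsTarget \asymp \ArbitraryEncoding{\cdot}$ to some source derivative, cannot correspond to three \emph{completed} emulations. Consequently at least one of the three fired deciding steps does not complete an emulation of its source step: it disables a neighbouring emulation while the emulation of its own source step remains only partially performed. This is precisely what it means for the encoding to split up a conflict---the conflict of one emulation with its two neighbours is resolved by two different \procMA-steps rather than atomically.

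Finally, from such a split conflict I would construct the claimed maximal execution. I would schedule the partial, conflict-deciding steps so that the splitting emulation disables the emulations needed for the neighbouring source steps without ever unguarding a second hole, and then extend the run to a maximal execution; divergence reflection keeps this extension finite, so the execution is genuinely maximal rather than looping, and the fact that $\asymp$ is a success respecting reduction bisimulation lets me relate the committed intermediate states to source derivatives and check that no further hole is unguarded. The outcome is a maximal execution of $\ArbitraryEncoding{\piS{\hole_a, \ldots, \hole_e}}$ that emulates only a single source step. The main obstacle is the bridge between the single-step conflict analysis of Lemmas~\ref{lem:onesidedConflictGreatM} and~\ref{lem:threeSteps} and the fact that emulations are \emph{sequences} of \procMA-steps: one has to pin down, inside each emulation, the \procMA-step realising each conflict, argue that an asymmetric such conflict genuinely forces a partial (hence splittable) emulation, and then show that the partial emulation can be scheduled to a maximal state unguarding exactly one of the five holes---this is the delicate, bookkeeping-heavy part deferred to the technical report.
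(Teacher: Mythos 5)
Your plan follows the paper's own (sketched) argument for this lemma essentially step for step: translate the five source conflicts into conflicts of the corresponding emulations (Lemma~\ref{lem:translateConflicts}), observe that an atomic emulation of these conflicts would yield a \patternGreatM-like structure in $\procMA$, invoke Lemma~\ref{lem:onesidedConflictGreatM} and Lemma~\ref{lem:threeSteps} to obtain an execution firing three deciding steps that operational soundness rules out for completed emulations, conclude that some conflict must be split, and then schedule the split to a maximal execution unguarding exactly one hole. The paper likewise defers the delicate bookkeeping (identifying the deciding steps inside multi-step emulations and the final scheduling that disables four emulations while completing only one) to the technical report, so your proposal matches both its route and its level of detail.
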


Again, the above proof is in its main idea similar to the respective proof of the corresponding result for encodings from \piMix into \piSep in \cite{petersNestmannGoltz13TR}.
However, since that proof depends on the expressive power of the considered target language to reason about the properties of the counterexample, we have to adapt it to mobile ambients.
Finally, we show again that this distribution of the conflict rules out the possibility of a good and distributability-preserving encoding.

\begin{theorem}
	\label{thm:noGoodEncodingMixSepGreatM}
	There is no good and distributability-preserving encoding from \piMix into \MA.
\end{theorem}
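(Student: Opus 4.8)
The plan is to argue by contradiction, reusing the counterexample $\piS{\piSE{1}, \ldots, \piSE{5}}$ and turning the single anomalous execution provided by Lemma~\ref{lem:splitGreatMAM} into a violation of success sensitiveness. So assume a good and distributability-preserving encoding $\arbitraryEncoding : \procMix \to \procMA$ exists. Since $\piS{\piSE{1}, \ldots, \piSE{5}}$ contains no replication it is convergent for every instantiation of the holes, and because $\arbitraryEncoding$ reflects divergence the encoded terms are convergent as well, so every maximal execution of an encoded term is finite. By Lemma~\ref{lem:splitGreatMAM} there is a maximal execution of $\ArbitraryEncoding{\piS{\hole_a, \ldots, \hole_e}}$ that emulates exactly one of the five source steps, \ie unguards exactly one hole; without loss of generality this is step $a$, which releases the continuation $\piSE{5}$ (the remaining four cases are symmetric). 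Recall from the structure of the counterexample that step $a$ synchronises the first and the fifth parallel component, so every maximal execution of the source performs exactly two of the five steps, and since distinct steps release distinct holes, it thereby unguards two distinct holes.

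First I would fix the instantiation $R_1 \deff \piS{\success, \success, \success, \success, \nullTerm}$, \ie $\piSE{5} = \nullTerm$ and all other holes equal to $\success$. Every maximal execution of $R_1$ performs two distinct steps and hence unguards two distinct holes; since only step $a$ can release $\piSE{5}$, at least one of the two unguarded holes lies among $\piSE{1}, \ldots, \piSE{4}$ and therefore equals $\success$. Thus $R_1 \mustReachSuccessFinite$, and Lemma~\ref{lem:mustSuccessSensitiveness} yields $\ArbitraryEncoding{R_1} \mustReachSuccessFinite$. As the split execution is a finite maximal execution of $\ArbitraryEncoding{R_1}$, it reaches success. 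The crucial observation is \emph{where} this success comes from: along the split execution only the hole $\piSE{5}$ is ever unguarded, its content is $\ArbitraryEncoding{\nullTerm}$ (which is not successful by success sensitiveness), and the holes $\piSE{1}, \ldots, \piSE{4}$ stay guarded throughout, so their successful content cannot contribute an unguarded $\success$. Hence the witnessing occurrence of $\success$ must be produced by the fixed compositional context surrounding the holes.

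Then I would compare with the instantiation $R_2 \deff \piS{\nullTerm, \nullTerm, \nullTerm, \nullTerm, \nullTerm}$. Since $R_2$ contains no occurrence of $\success$ we have $R_2 \notReachSuccess$, and success sensitiveness gives $\ArbitraryEncoding{R_2} \notReachSuccess$. However, $R_1$ and $R_2$ share the same free names (both $\success$ and $\nullTerm$ have none), so by the relaxed compositionality of Definition~\ref{def:goodEncoding} the surrounding context and the inner context are identical for both; moreover $R_1$ and $R_2$ agree on the content of $\piSE{5}$ (it is $\nullTerm$ in both) and differ only in the guarded holes $\piSE{1}, \ldots, \piSE{4}$. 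Because the split execution never unguards those four holes, the very same sequence of reductions is available in $\ArbitraryEncoding{R_2}$ and reaches a state that differs from the one reached in $\ArbitraryEncoding{R_1}$ only inside the still-guarded holes; in particular it exhibits the same context-produced, unguarded $\success$. Thus $\ArbitraryEncoding{R_2} \reachSuccess$, contradicting $\ArbitraryEncoding{R_2} \notReachSuccess$. This contradiction shows that no good and distributability-preserving encoding from \piMix into \MA can exist.

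The heavy lifting is delegated to Lemma~\ref{lem:splitGreatMAM} (built on Lemmas~\ref{lem:onesidedConflictGreatM} and~\ref{lem:threeSteps}), so the main obstacle that remains is the compositional transfer step: one has to make precise that the unguarded $\success$ reached by the split execution of $\ArbitraryEncoding{R_1}$ lives in the shared context rather than in any hole, and that replaying the identical reduction sequence in $\ArbitraryEncoding{R_2}$---where only the permanently guarded holes differ---preserves this unguarded $\success$. This is exactly the analogue of the compositionality-plus-adaptation move used at the end of the proof of Theorem~\ref{thm:noGoodEncodingMA} for $\maMS$ and $S_2'$, and it is where the guarantee that the split execution unguards \emph{exactly one} hole is essential.
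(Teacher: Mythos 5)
Your proof is correct in outline and, like the paper, delegates the real work to Lemma~\ref{lem:splitGreatMAM}, but the final contradiction is obtained by a genuinely different route. The paper uses a \emph{single} instantiation: it sets the hole released by the split-off step $x$ \emph{and} the hole released by one step $y$ parallel to $x$ to $\nullTerm$ and the remaining three to $\success$, so that $S_x\reachSuccess$ but $S_x\notMustReachSuccessFinite$; the deadlocked endpoint of the split execution must then (via operational soundness and the success-respecting reduction bisimulation $\asymp$) be equivalent to $\ArbitraryEncoding{S_x}$ and would have to both contain an unguarded $\success$ and admit a success-free finite maximal execution, which is impossible for a term without further steps. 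You instead use \emph{two} instantiations ($R_1$ with four $\success$-holes, $R_2$ with none) and argue purely from compositionality and success sensitiveness: $R_1\mustReachSuccessFinite$ forces the split execution of $\ArbitraryEncoding{R_1}$ to reach success, the witnessing $\success$ cannot originate in the four permanently guarded holes, and replaying the execution in $\ArbitraryEncoding{R_2}$ exhibits the same $\success$ although $R_2\notReachSuccess$. This buys you a contradiction that never appeals to $\asymp$ or operational soundness in the last step, at the price of leaning harder on the replay argument; that step is legitimate (it is the analogue of the $S_2'$ move in the proof of Theorem~\ref{thm:noGoodEncodingMA}, and here it is even cleaner because you only exchange contents of holes that Lemma~\ref{lem:splitGreatMAM} guarantees remain guarded, so they cannot contribute to any redex in \MA), but one detail should be stated more carefully: the witnessed $\success$ need not literally sit in the surrounding context---it could also originate from the shared content $\ArbitraryEncoding{\nullTerm}$ of the fifth hole and become unguarded only through interaction with the context---yet this does not harm your argument, since everything outside holes $1$--$4$ is identical in $\ArbitraryEncoding{R_1}$ and $\ArbitraryEncoding{R_2}$.
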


The proof of this Theorem very closely follows the proof of the corresponding Theorem for encodings from \piMix into \piSep in \cite{petersNestmannGoltz13TR}.
It picks the maximal execution of the translation that unguards|according to Lemma~\ref{lem:splitGreatMAM}|only one hole $ \hole_x $ by emulating only one step $ x $ of $ \piS{\piSE{1}, \ldots, \piSE{5}} $.
Then, we can choose $ \piSE{1}, \ldots, \piSE{5} \in \Set{ \maNull, \success } $ such that $ \piSE{x} = \nullTerm = \piSE{y} $, where $ y $ is one of the two steps that is parallel to $ x $, and $ \piSE{z} = \success $ for all other cases.
Accordingly, for the result $ S_x $ of the step $ x: \piS{\piSE{1}, \ldots, \piSE{5}} \step S_x $, we have $ S_x\notMustReachSuccessFinite $, by doing $ y $ next, but $ S_x\reachSuccess $, because of success in the respective other step that can be executed after $ x $.
However, the maximal execution of $ \piS{\hole_a, \ldots, \hole_e} $ that unguards only $ \hole_x $ and emulates only $ x $ cannot have the same behaviour \wrt success.
After emulating $ x $ we reach a term that cannot offer the possibility to reach success (without the emulation of another source term step) as well as to deadlock without reaching success.
This violates our requirements on good encodings.

\section{Distributing Mobile Ambients}
\label{sec:distributeMA}

Theorem~\ref{thm:noGoodEncodingMA} shows that mobile ambients are not as distributable as the \joinCal.
Nonetheless, \cite{fournetLevySchmitt00} presents an encoding from \MA into \join in order to build a distributed implementation of mobile ambients in Jocaml (\cite{jocaml99}).
Let us consider what this encoding does with our counterexample $ \maM $ for the non-existence of a good and distributability-preserving encoding from \MA into \join.
The encoding in \cite{fournetLevySchmitt00} translates each ambient into a single unique join definition.
Then it splits $ \maInAct $, $ \maOutAct $, and $ \maOpenAct $-actions into respective subactions that are controlled by the join definition that represents the parent ambient in the source.
Therefore, to perform the emulations of the distributed steps $ a $ and $ c $ of $ \maM $, the respective parts of the implementation first have to register their desire to do these steps with their parent join definition.
Unfortunately, as each join definition is a single location, these two steps interact with the same join definition, so they cannot be considered as distributed.
Accordingly, the encoding presented in \cite{fournetLevySchmitt00} is not distributability-preserving in our sense, because the emulations of $ a $ and $ c $ are synchronised.

Indeed, the authors of \cite{fournetLevySchmitt00} already state that the explicit control of subactions by the translation of the parent ambient introduces some form of synchronisation.
However, they claim that the form of synchronisation introduced by the presented encoding is less crucial than, \eg, a centralised solution.
Our results support the quality of their solution, by proving that no good and fully distributability-preserving encoding from \MA into \join exists.
So, a bit of synchronisation is indeed necessary.
But, our results also suggest possible ways to circumvent the problems in the distribution of mobile ambients altogether by proposing small alterations of the source calculus itself in order to prevent \patternM-patterns from the outset.

By Lemma~\ref{lem:onesidedConflictM}, all \patternM in mobile ambients rely on a conflict with an $ \maOpenAct $-action that addresses two different ambients with the same name.
A natural solution to circumvent this problem is to avoid different ambients with the same name.
By Lemma~\ref{lem:onesidedConflictM}, mobile ambients with unique ambient names cannot express the pattern \patternM.

\begin{corollary}
	There are no \patternM in mobile ambients, where all ambient names are unique.
\end{corollary}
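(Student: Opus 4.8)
The plan is to obtain the corollary directly from Lemma~\ref{lem:onesidedConflictM}, reasoning by contradiction. Assume some $P \in \procMA$ in which all ambient names are unique is an \patternM, with steps $a$, $b$, $c$ as in Definition~\ref{def:synchronisationPatternM}. The pattern \patternM abstracts the fully reachable pure \patternM of Figure~\ref{fig:MInPetri}~(a), so both of its conflicts---of $b$ with $a$ and of $b$ with $c$---are symmetric: each of $a$ and $c$ disables $b$ and is disabled by $b$ in turn. The one-sided asymmetric conflicts isolated in Section~\ref{sec:conflictsInMA} do \emph{not} realise the pattern, which is exactly why Lemma~\ref{lem:onesidedConflictM} offers ``one conflict is asymmetric'' only as the alternative to be ruled out. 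Since for a genuine \patternM neither conflict is asymmetric, Lemma~\ref{lem:onesidedConflictM} forces its remaining alternative: step $b$ reduces an $\maOpenAct$-action, opening some ambient named $n$ by exercising a capability $\maOpen{n}$.

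First I would pin down which capabilities $b$ can possibly share with $a$ or $c$. An $\maOpenAct$-step of the shape $\maOpen{n}.Q_1 \mid \maLoc{n}{Q_2} \step Q_1 \mid Q_2$ removes exactly two non-recurrent capabilities, namely the prefix $\maOpen{n}$ and the opened ambient $\maLoc{n}{Q_2}$; every enclosing ambient survives and is therefore recurrent. Because both conflicts are symmetric, the partner step must itself consume one of these two capabilities non-recurrently: $a$ consumes either the prefix $\maOpen{n}$ or the ambient $\maLoc{n}{Q_2}$, and likewise for $c$.

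The core of the proof is to show that uniqueness of ambient names channels both conflicts through the single opened ambient. If $a$ were to consume the very prefix $\maOpen{n}$ used by $b$, then $a$ would open an ambient named $n$; as $a$ and $b$ have distinct derivatives, $a$ would have to open a \emph{second} ambient named $n$, contradicting that all ambient names are unique. Hence $a$ cannot share $b$'s $\maOpen{n}$-prefix and must instead consume the opened ambient $\maLoc{n}{Q_2}$ itself, either by opening it through a different $\maOpen{n}$-prefix (which uniqueness of \emph{ambient} names, as opposed to capabilities, still permits) or by moving it via an $\maInAct$- or $\maOutAct$-capability within it. The identical reasoning applies to $c$. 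But there is only one ambient named $n$, so $a$ and $c$ both consume it non-recurrently and thus disable each other; they are in conflict. This contradicts the requirement of Definition~\ref{def:synchronisationPatternM} that $a$ and $c$ be parallel, so no such \patternM exists.

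I expect the main obstacle to be the careful case distinction over how a step may involve an ambient---consuming it by moving or opening it (non-recurrent) versus merely using it as the target or parent of a movement (recurrent)---together with the verification that only the non-recurrent involvements can produce the symmetric conflicts the pattern demands. Once this is settled, uniqueness of names leaves the opened ambient as the sole shared resource available to both conflicts, and its shared non-recurrent consumption is precisely what annihilates the parallelism of $a$ and $c$ required by the pattern.
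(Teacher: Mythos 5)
Your proof is correct and takes essentially the paper's route: the corollary is obtained directly from Lemma~\ref{lem:onesidedConflictM} by discarding the asymmetric-conflict disjunct (faithful to the paper's reading, since \patternM-conflicts per Definition~\ref{def:distributableSteps} are the symmetric ones) and observing that uniqueness of ambient names blocks the remaining $\maOpenAct$-alternative. The only difference is that you explicitly reconstruct the sharpened fact the paper imports from the tech-report proof of the lemma---that the conflicts would force the $\maOpenAct$-action to address two distinct ambients with the same name, since with a unique ambient $n$ the shared $\maOpen{n}$-prefix yields identical steps and routing both conflicts through the single opened ambient would put $a$ and $c$ in conflict---which is a sound filling-in of a detail the paper's one-line justification leaves implicit.
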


Without such an \patternM as counterexample, our proof of Theorem~\ref{thm:noGoodEncodingMA} would no longer work.
Instead, we can show that there is then no good and distributability-preserving encoding from \piAsyn into \MA, by using the example of an \patternM in \piAsyn of \cite{petersNestmannGoltz13} as counterexample and following a similar proof strategy as for the separation result between \piAsyn and \join.

\begin{claim}
  If mobile ambients forbid for ambients with the same name, then there is no good and distributability-preserving encoding from \piAsyn into \MA.
\end{claim}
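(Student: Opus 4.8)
The plan is to mirror the structure used for Theorem~\ref{thm:noGoodEncodingMA}, but now with \piAsyn as the source and the restricted variant of \MA (with unique ambient names) as the target. First I would recall the non-local \patternM in \piAsyn exhibited in \cite{petersNestmannGoltz13}, namely $ \Output{y}{u} \mid \Input{y}{x}.P_1 \mid \Output{y}{v} \mid \Input{y}{x}.P_2 $, and instantiate $ P_1 $ and $ P_2 $ to obtain a counterexample $ E $ in which the conflicting middle step $ b $ can be distinguished from the two distributable steps $ a $ and $ c $ by their success behaviour. Concretely, I would choose the instances so that $ E $ reaches success exactly when both distributable steps $ a $ and $ c $ fire, while the maximal executions that commit to the conflicting step $ b $ fail to reach success; this is precisely the shape that made $ \maMS $ usable in the original argument, and \cite{petersNestmannGoltz13} already provides such an instantiation of the \piAsyn pattern.

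Next I would transfer the splitting argument. Assuming a good and distributability-preserving encoding $ \arbitraryEncoding : \procAsyn \to \procMA $ exists, I would invoke the analogue of Lemma~\ref{lem:splitUpB}: since $ a $ and $ c $ are distributable in $ E $ and both conflict with $ b $, any good distributability-preserving encoding must either reproduce a non-local \patternM among the emulations or distribute the conflict, so that there is a maximal execution of $ \ArbitraryEncoding{E} $ emulating $ a $ but not $ c $ (or vice versa). The crucial point is that the restricted target language \emph{cannot} reproduce a non-local \patternM at all: by the Corollary, \MA with unique ambient names contains no \patternM, so the "reproduce the pattern" horn of the dichotomy is closed off and the encoding is forced down the splitting branch. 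From here the proof of the final theorem follows the template of Theorem~\ref{thm:noGoodEncodingMA}: I would use the relaxed compositionality of Definition~\ref{def:goodEncoding} to write $ \ArbitraryEncoding{E} = \Context{}{}{\ArbitraryEncoding{S_1}, \ArbitraryEncoding{S_2}} $, build an adapted source term $ S_1 \mid S_2' $ that removes exactly the capability enabling $ c $ while leaving the free names (hence the enclosing context) unchanged, and derive that the adapted term reaches success in its encoding but not in the source, contradicting success sensitiveness.

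The main obstacle I anticipate is verifying that the counterexample $ E $ and its adaptation $ S_2' $ can be chosen to lie in \piAsyn and simultaneously satisfy all the delicate properties the argument needs: that $ a $ and $ c $ are genuinely distributable, that $ b $ conflicts with both, that $ S_a \mustReachSuccessFinite $ and $ S_c \mustReachSuccessFinite $ while $ S_b \notReachSuccess $, and that the adaptation differs from $ E $ only in a capability for $ c $ with identical free names so that the same encoding context applies. In \MA the counterexample $ \maMS $ used ambient sharing to build its \patternM, which is exactly the feature we have now forbidden; the \piAsyn counterexample instead rests on two inputs competing for outputs on a shared channel, so I must take care that the asymmetry of \MA conflicts (discussed in Section~\ref{sec:conflictsInMA}) does not interfere—but here that is unproblematic, because the \patternM and all its attendant success properties live entirely on the \piAsyn source side, and only the impossibility of reconstructing a \patternM in the target is invoked. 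I would therefore expect the work to be largely a careful re-instantiation of the existing proof schema rather than a genuinely new argument, the one new ingredient being the appeal to the Corollary to block the alternative of emulating a non-local \patternM directly in the restricted \MA.
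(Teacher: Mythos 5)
Your overall strategy coincides with the one the paper sketches for this claim: reuse the non-local \patternM of \piAsyn as the counterexample, force the encoding to split the conflict of $b$ with $a$ and $c$ because the restricted target cannot reproduce the pattern, and then derive a contradiction with success sensitiveness via the compositional-context argument, exactly as in Theorem~\ref{thm:noGoodEncodingMA}. Note, however, that the paper deliberately does \emph{not} prove this claim: it states that the proof ``relies on the formalisation of the requirement that no two different ambients have the same name in the definition of the calculus'' and that the splitting lemma must be re-proved against whatever syntax/semantics realises that restriction. Your proposal treats the target side as settled and locates the main difficulty in choosing the \piAsyn counterexample, but that part is already done in \cite{petersNestmannGoltz13} (it is the counterexample E1 used for the \piAsyn-vs-\join separation); the genuinely open work is on the target side.

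Concretely, your claim that the asymmetry of conflicts in \MA ``does not interfere'' because everything lives on the source side is the weak point. The splitting argument is a target-side argument: it needs that the symmetric source conflicts of $b$ with $a$ and with $c$ are realised by target steps that mutually disable each other, and it closes off the alternative horn by appealing to the absence of a non-local \patternM in the target. But uniqueness of ambient names does not remove asymmetric conflicts --- the example $\maLoc{n_1}{\maIn{n_2}} \mid \maLoc{n_2}{\maIn{n_3}} \mid \maLoc{n_3}{}$ in Section~\ref{sec:conflictsInMA} already uses pairwise distinct ambient names. By Lemma~\ref{lem:onesidedConflictM}, forbidding shared names only excludes the $\maOpenAct$-based (symmetric) case, so the restricted calculus can still form non-local \patternM-like structures whose conflicts are asymmetric. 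Hence the Corollary alone does not block the ``reproduce the pattern'' horn; you additionally have to argue that an asymmetric target conflict cannot faithfully emulate a symmetric source conflict (e.g.\ because the non-disabled direction would admit a target execution emulating both $a$ and $b$, contradicting operational soundness together with $S_b\notReachSuccess$). That extra step, dependent on the chosen formalisation of unique names, is precisely the adaptation the paper says is missing, and it is missing from your proposal as well.
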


The proof of the above claim relies of the formalisation of the requirement that no two different ambients have the same name in the definition of the calculus.
More precisely, we need to adapt the proof that every good and distributability-perserving encoding has to split up the conflict in the \patternM of $ b $ with $ a $ and $ c $ to the target language \MA with unique ambient names.
Since there are several different ways to implement this requirement in the syntax of mobile ambients, we do not formally prove the above claim here.
However, we expect that this proof would exploit the same strategy as in \cite{petersNestmannGoltz13TR} and require only small adaptations due to the definition of the calculus.

Actually, the possibility to have different ambients with the same name was already identified as problematic in the encoding of \cite{fournetLevySchmitt00}.
To circumvent this problem, the encoding introduces unique identifiers for all ambients and one of the reasons for the interaction with the respective translation of the parent ambient to control the translations of ambient actions is that these translations of parent ambients keep the knowledge about the unique identifiers of their children.
Thus, forbidding different ambients with the same name not only allows for completely distributed implementations of the calculus but also significantly simplifies translations that follow the strategy of \cite{fournetLevySchmitt00}.

To obtain strategies to implement this requirement, we can have a look at other distributed calculi with unique location names.
The \joinCal (\cite{fournetGonthier96}) ensures the uniqueness of its locations by combining input prefixes with restriction in join definitions.
Thus, every join definition, \ie location, introduces its own name space.
Interaction is limited to such restricted names with a clear and unique destination.
The advantage is that the uniqueness of location names is ensured by definition;
the disadvantage is that some forms of interaction---\eg a two-way handshake---are syntactically more difficult due to these sharp restriction borders.
The distributed \piCal (\cite{hennessy07}) has a flat structure of locations and ensures uniqueness by the structural congruence rule $ \maLoc{n}{P} \mid \maLoc{n}{Q} \equiv \maLoc{n}{P \mid Q} $ that unifies different parts of a location.
However, adding such a rule to mobile ambients requires a non-trivial adaptation of the semantics, because the $ \maOpenAct $, $ \maInAct $, and $ \maOutAct $-actions would need to first collect all ambient parts that are possibly dispersed over the term structure before they can proceed.
Moreover, following this approach would not completely rule out different ambients with the same name but only different such ambients in the same parent ambient (or at top-level). This is, however, sufficient to ensure that there are no \patternM.

\section{Conclusions}
\label{sec:conclusions}

We proved that there is no good and distributability-preserving encoding from mobile ambients (\MA) into the \joinCal (\join) and neither from the standard \piCal with mixed choice (\piMix) into mobile ambients.
Note that these results stay valid also for the extension of \MA with communication prefixes as described in \cite{cardelliGordon98, cardelliGordon00}, because these communications are local steps that cannot be in conflict to steps with $ \maInAct $, $ \maOutAct $, or $ \maOpenAct $-actions.
Thus, all conflicts added by the extension with communication primitives are local and not relevant for the preservation of distributability.
Consequently, by extending the results of \cite{petersNestmannGoltz13}, we place mobile ambients on the same level as the \piCal with separate choice (\piSep) and the asynchronous \piCal (\piAsyn) above \join and below \piMix.
As visualized in Figure~\ref{fig:hierarchy}, mobile ambients contain non-local \patternM but cannot express a non-local \patternGreatM without asymmetric conflicts.

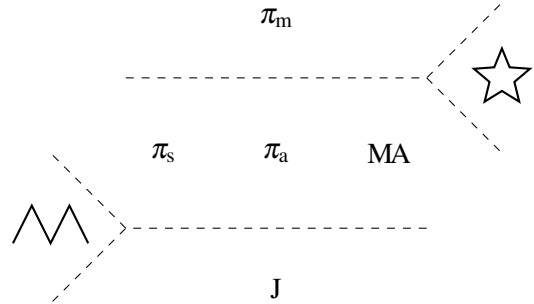
\begin{figure}[t]
	\centering
	\begin{tikzpicture}[node distance=3cm, auto]
		\node (mix)		at (0,1.8)		{\piMix};
		\node (sep)		at (-1.5, 0)	{\piSep};
		\node (asyn)	at (0, 0)		{\piAsyn};
		\node (ma)		at (1.5, 0)		{\MA};
		\node (join)	at (0, -1.8)	{\join};

		\draw[thick] node[star, star points=5,star point ratio=2.25, draw=black, inner sep=3.5pt] at (3, 1) {};
		\draw[dashed] (-2, 1) -- (2,1) -- (3,2);
		\draw[dashed] (2,1) -- (3,0);

		\draw[thick] (-3.5,-1.2) -- (-3.25, -.7) -- (-3, -1.2) -- (-2.75,-.7) -- (-2.5,-1.2);
		\draw[dashed] (2,-1) -- (-2,-1) -- (-3,-2);
		\draw[dashed] (-2,-1) -- (-3,0);
	\end{tikzpicture}
	\vspace*{-1em}
	\caption{Distributability in Pi-like Calculi.}
	\label{fig:hierarchy}
\end{figure}

Asymmetric conflicts, as present in mobile ambients, constitute a variant of conflicts that turns out to be not as crucial for distributed implementations as the standard symmetric conflicts that we usually find in calculi.  Nonetheless, the existence of non-local \patternM make fully distributed implementations of mobile ambients difficult---as already observed in \cite{fournetLevySchmitt00}.  However, since the reason for these difficulties is now clearly captured in a simple synchronisation pattern, we can more easily derive strategies to adapt mobile ambients to a distributed calculus without such problems.

Interestingly, the extension of mobile ambients into mobile safe ambients in \cite{leviSangiorgi03} does not solve this problem.
The main idea of safe ambients is that actions require an explicit agreement on this action by both participating ambients. Therefore, safe ambients augment the respective target ambient of an action~$ a $ with a matching complementary action $ \overline{a} $.
This extension, however does neither change the power to express the pattern \patternM nor the asymmetric nature of conflicts with steps that do not rely on an $ \maOpenAct $-action.
In fact, the $ \maM $ in mobile ambients, \ie the pattern \patternM, becomes
\begin{center}
	$ \left( \maOpen{n_1} \mid \maLoc{n_1}{\maOpenO{n_1} \mid P_1} \right) \mid \left( \maLoc{n_1}{\maOpenO{n_1} \mid \maIn{n_2}.P_2} \mid \maLoc{n_2}{\maInO{n_1} \mid P_3} \right) $
\end{center}
in safe ambients.
This term is again an \patternM sharing the kind of steps and properties of $ \maM $.
Thus, we obtain the same separation result as in Theorem~\ref{thm:noGoodEncodingMA} with safe ambients using the above counterexample.
Moreover, since safe ambients do also not contain \patternGreatM, also Theorem~\ref{thm:noGoodEncodingMixSepGreatM} stays valid for safe ambients.

The most obvious way to obtain a fully distributed variant of mobile ambients is to ensure uniqueness of ambient names.
As a consequence, actions of mobile ambients have a clear and unique destination.
Note that, having clear and unique destinations for all actions that travel location borders is also crucial for the distributability of other calculi such as the \joinCal or the distributed \piCal.
Such unique destinations significantly limit the possibility of conflicts and ensure that all remaining conflicts of the language are local.
As a consequence, distributed implementations of such languages do not need to introduce synchronisations and, thus, do not change their semantics.
Hence, keeping the destinations for all actions that travel location borders unique, is a good strategy to build distributed calculi in general.


\bibliography{DistributabilityMobileAmbients}

\end{document}